\documentclass[12pt]{article}
\usepackage{amsmath,amsfonts, amsthm,amssymb,hyperref,bbm,eucal,verbatim}

\newtheorem{thm}{Theorem}
\newtheorem{lemma}{Lemma}[section]

\newtheorem*{rmk}{Remark}

\newtheorem{cl}[lemma]{Claim}

\def\wig{\rho_\text{sc}}
\def\tr{\mathrm{tr}}
\def\z{{\bf z}}
\def\x{{\bf x}}
\def\y{{\bf y}}
\def\e{{\bf e}}
\def\xx{\tilde{\x}}
\def\yy{\tilde{\y}}
\def\bpsi{{\boldsymbol \psi}}
\def\bphi{{\boldsymbol \phi}}
\def\tphi{\widetilde{\boldsymbol \phi}}

\def\E{E_\epsilon}

\begin{document}

\title{Density of states for GUE, GOE, and interpolating ensembles through supersymmetric approach}
\author{Mira Shamis\footnote{Department of Mathematics,  Princeton University,  Princeton NJ 08544, USA and Institute for Advanced Study, Einstein Dr., Princeton, NJ 08540 .
e-mail: mshamis.princeton.edu. Supported in part by NSF grants PHY-1104596 and DMS-1128155.}}
\maketitle

\begin{abstract}
We use the supersymmetric formalism to derive an integral formula for the density of states of the
Gaussian Orthogonal Ensemble, and then apply saddle-point analysis to give a new derivation
of the $1/N$-correction to Wigner's law. This extends the work of Disertori on the Gaussian
Unitary Ensemble. We also apply our method to the interpolating ensembles of Mehta--Pandey.
\end{abstract}

\section{Introduction}

In this note we study the density of states for the Gaussian Unitary Ensemble
(GUE) and the Gaussian Orthogonal Ensemble (GOE). Both are ensembles of $N \times N$
Hermitian random matrices $H$, so that the joint distribution of the entries is
centered Gaussian, and the covariance of the entries is given by
\begin{equation}\label{eq:cov} \langle H_{ij} H_{kl} \rangle =
\begin{cases}
 \frac{1}{N} \delta(jk) \delta(il)~, &\text{GUE}\\
 \frac{1}{N} \left( \delta(jk) \delta(il) + \delta(ik) \delta(jl)\right)~, &\text{GOE}
\end{cases}~.
\end{equation}
Here $\langle \cdot \rangle$ denotes the average (expectation), and $\delta$
denotes the Kronecker delta,
\[ \delta(ij) = \begin{cases}
1~, &i=j\\
0~, &i \neq j \end{cases}~.\]
In particular, the GUE entries are complex (the diagonal elements are real), whereas
the GOE entries are real.

The density of states $\rho(E)$ is defined by
\[\rho(E) = \frac{d}{dE} \left\langle \frac{1}{N} \# \left\{ \text{eigenvalues $\leq E$} \right\} \right\rangle~.\]
Let
\[ \wig(E) = \frac{1}{\pi} \sqrt{(1-E^2/4)_+} \]
be the (Wigner) semicircle density. We give new proofs for the following two theorems (see below for
the history of these and related results):

\begin{thm}\label{thm:gue} For GUE,
\[ \rho(E) =
\wig(E) - \frac{(-1)^N}{4\pi^3N\wig(E)^2} \cos\left[N\left(E\sqrt{1-\frac{E^2}{4}}+2\arcsin\frac{E}{2}\right)\right]
  + O(N^{-3/2}) \]
for $|E| < 2-\delta$, and the implicit constant in the $O$-notation depends only on $\delta >0$.
\end{thm}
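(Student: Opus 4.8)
The plan is to follow the supersymmetric method. The starting point is the standard reduction of the density of states to the averaged resolvent $G(z)=(z-H)^{-1}$:
\[
\rho(E)=-\frac{1}{\pi N}\lim_{\epsilon\downarrow 0}\operatorname{Im}\big\langle\tr G(E+i\epsilon)\big\rangle,
\qquad
\big\langle\tr G(z)\big\rangle=\left.\partial_{z'}\right|_{z'=z}\left\langle\frac{\det(z'-H)}{\det(z-H)}\right\rangle .
\]
I would represent the ratio of determinants as a Gaussian superintegral --- $\det(z-H)^{-1}$ over a commuting vector $\bphi\in\mathbb{C}^{N}$ (its contour, and the restriction $\operatorname{Im}z>0$, forced by convergence), and $\det(z'-H)$ over a Grassmann vector $\bpsi$. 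Because the law of $H$ is Gaussian, the average over $H$ can be carried out explicitly; it produces a quartic self-interaction of the superfield, which a Hubbard--Stratonovich transformation decouples at the cost of a single auxiliary $1|1$ supermatrix. Integrating out $\bphi$ and $\bpsi$ then concentrates all the $N$-dependence in an exponential $e^{-N\mathcal S(\,\cdot\,;z)}$ and an overall power, and --- once the elementary Grassmann integration has been performed --- leaves a one-dimensional contour integral in a complex variable $a$, with $\mathcal S(a;z)=\tfrac12 a^{2}-za+\log a$ up to normalisation. I would include this computation for completeness; for the single-resolvent $\beta=2$ problem it is by now routine, the only genuinely delicate points being the convergence of the bosonic Gaussian integrals and the legitimacy of the Hubbard--Stratonovich step.

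The heart of the proof is the steepest-descent analysis of this integral as $N\to\infty$, uniformly for $|E|\le 2-\delta$. The saddle equation $\partial_a\mathcal S=0$, i.e.\ $a-E+1/a=0$, has for $|E|<2$ the two roots $a_\pm=\tfrac E2\pm i\sqrt{1-E^2/4}=e^{\pm i\theta}$ with $\theta=\arccos(E/2)$, both on the unit circle and conjugate to each other. The bosonic contour can be deformed through exactly one of them --- the physical saddle, selected by the convergence of the bosonic integral and reproducing $\lim_{\epsilon\downarrow 0}\langle\tfrac1N\tr G(E+i\epsilon)\rangle=m_{\mathrm{sc}}(E)$; I would carry out this deformation, checking that the branch point of $\log a$ at $a=0$ is not crossed and that the contour tails are negligible. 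With the bosonic variable pinned at the physical saddle, the (Grassmann) fermionic integration still feels \emph{both} roots: the supersymmetric term, with the fermion at the same root, reproduces after $-\tfrac1\pi\operatorname{Im}(\cdot)$ the semicircle density $\wig(E)$, while the term with the fermion at the conjugate root is the source of the $1/N$ correction. The phase of the latter is $N\operatorname{Im}\big[\mathcal S(a_+;E)-\mathcal S(a_-;E)\big]=N\big(2\arccos\tfrac E2-E\sqrt{1-E^2/4}\big)=N\pi-N\big(E\sqrt{1-E^2/4}+2\arcsin\tfrac E2\big)$ (up to an overall sign that is immaterial, since it enters through a cosine), so that its cosine equals $(-1)^{N}\cos\big[N(E\sqrt{1-E^2/4}+2\arcsin\tfrac E2)\big]$ --- the factor $(-1)^{N}$ being exactly the record of the identity $2\arccos(E/2)=\pi-2\arcsin(E/2)$, i.e.\ of the branch $\log a_\pm=\pm i\theta$. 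Its amplitude is fixed by the Gaussian fluctuation determinants $\mathcal S''(a_\pm)=1-a_\pm^{-2}$ (with $\mathcal S''(a_+)\mathcal S''(a_-)=4-E^2=4\pi^{2}\wig(E)^{2}$) together with the factor $(a_+-a_-)^{-2}=(2\pi i\,\wig(E))^{-2}=-(4\pi^{2}\wig(E)^{2})^{-1}$ coming from the supersymmetry-breaking separation of the two roots; combined with the overall $\tfrac1\pi$ these produce the coefficient $1/(4\pi^{3}N\wig(E)^{2})$ and the sign in the statement. The remainder $O(N^{-3/2})$ collects the next order in the Laplace expansion around each root and the exponentially small contribution of the part of the contour away from the physical saddle.

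Finally I would dispose of the analytic fine print: convergence of all the Gaussian (super)integrals and validity of the Hubbard--Stratonovich transformation for $\operatorname{Im}z>0$ (choosing the $\bphi$-contour accordingly); the interchange of the limit $\epsilon\downarrow 0$ with the finite-dimensional integral, legitimate because the latter is analytic in $z$ up to the real axis for $|E|<2$; and the uniformity of the error estimates for $|E|\le 2-\delta$, which holds because the two saddles then remain at distance $\gtrsim\delta$ from the branch point $a=0$ and from one another, so that $\mathcal S''(a_\pm)$ and the widths of the steepest-descent valleys stay bounded below.

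The step I expect to be the main obstacle is this rigorous steepest-descent analysis --- producing a global contour deformation that threads the physical saddle while avoiding the logarithmic branch point, bounding the off-saddle part uniformly in $E$, and, above all, tracking the multivalued $\log a$ (equivalently, the $2\pi i$-ambiguities in $\mathcal S(a_+)-\mathcal S(a_-)$) and the Gaussian normalisation constants precisely enough to pin down both the sign $(-1)^{N}$ and the exact coefficient $1/(4\pi^{3}N\wig(E)^{2})$. Making the supersymmetric representation itself fully rigorous is a secondary and, in this $\beta=2$ single-resolvent setting, essentially standard matter.
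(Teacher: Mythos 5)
Your plan is sound in its essentials --- supersymmetric representation followed by steepest descent --- and your saddle-point discussion is accurate: the two saddles $e^{\pm i\arccos(E/2)}$, the fact that the bosonic contour hits only one while the fermionic contour integral feels both, the phase computation $2\arccos(E/2)-E\sqrt{1-E^2/4}=\pi-(E\sqrt{1-E^2/4}+2\arcsin(E/2))$ producing the $(-1)^N$, the products $\mathcal S''(a_+)\mathcal S''(a_-)=4-E^2$ and $(a_+-a_-)^2=-4(1-E^2/4)$ all match what the paper obtains. But the route to the dual integral representation is genuinely different from the paper's. You propose the determinant-ratio representation followed by a Hubbard--Stratonovich decoupling with a $1|1$ supermatrix --- this is precisely the derivation of Kalisch--Braak and Disertori, which the paper cites and deliberately departs from. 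The paper instead starts directly from the Gaussian superintegral for the resolvent matrix element, averages over $H$ using $\langle e^g\rangle=e^{\langle g^2\rangle/2}$, removes the cross term by a unitary rotation of the Grassmann variables, integrates out a single Grassmann pair explicitly, and then converts the remaining Grassmann integral to a contour integral via the identity $\int D\bphi\, F([\bphi,\bphi])=(-1)^N\frac{N!}{2\pi i}\oint F(z)z^{-N-1}dz$ --- no auxiliary supermatrix appears. The paper's route buys a cleaner two-variable formula (real $s>0$ and a small contour in $z$) and sidesteps the convergence and contour-choice issues that HS raises; your route buys familiarity and a more systematic path to higher correlation functions. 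One small inaccuracy worth flagging: after HS and integrating out the Grassmann entries of the supermatrix one is still left with a \emph{two}-dimensional integral (bosonic and fermionic diagonal blocks), not a one-dimensional one; your later discussion correctly treats both variables, so this is only a slip in the prose, but it should be fixed. Also note that, at the supersymmetric saddle, the prefactor in the paper's formula vanishes (the $iE-z+s$ factor is zero at $s=s_+$, $z=z_+$), so the derivation of the $O(1/N)$ coefficient requires going beyond the leading Gaussian term at that saddle; the paper uses the exact identity $\langle i\rangle=i$ from the second half of Lemma~\ref{l:GUE} to isolate this correction cleanly, a device your sketch does not anticipate and which you would need to replace with an explicit second-order Laplace expansion.
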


\begin{thm}\label{thm:goe} For GOE,
\[ \rho(E) = \wig(E) - \frac{1}{4\pi^2N\wig(E)} + O(N^{-3/2}) \]
for $|E| < 2-\delta$, and the implicit constant in the $O$-notation depends only on $\delta >0$.
\end{thm}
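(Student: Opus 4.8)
The plan is to reduce $\rho(E)$ to a finite-dimensional superintegral and then analyse that integral by steepest descent, following the scheme Disertori used for GUE but carrying the doubling of fields that the real symmetry of GOE forces. The starting point is
\[
\rho(E) \;=\; \frac{1}{\pi N}\,\lim_{\epsilon \downarrow 0}\,\mathrm{Im}\,\bigl\langle \tr\,(H - E - i\epsilon)^{-1}\bigr\rangle ,
\]
so, writing $\E = E + i\epsilon$, it suffices to expand $\langle \tr (H-\E)^{-1}\rangle$ in powers of $1/N$ and take imaginary parts.

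\textbf{Supersymmetric representation and averaging.} First I would write the resolvent as a Gaussian superintegral: introduce a commuting (bosonic) field $\bphi \in \mathbb{C}^N$ and an anticommuting (Grassmann) field $\bpsi$, so that (up to a fixed constant) $\tr(H-\E)^{-1}$ is the expectation of the quadratic form $|\bphi|^2$ against the weight $\exp\{ i\,\bphi^*(\E - H)\bphi + i\,\bpsi^*(\E - H)\bpsi\}$, the Grassmann sector cancelling the bosonic normalising determinant exactly so that no ratio of determinants survives. Since the GOE covariance in \eqref{eq:cov} is a sum of two terms rather than one, both fields must be doubled (this is the $\beta = 1$, time-reversal, structure); the algebra is then heavier than for GUE but of the same shape. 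Performing the Gaussian average over $H$ turns the part of the exponent linear in $H$ into a quartic self-interaction of the superfields, of the schematic form $\tfrac{1}{2N}\sum_{ij}(\cdots)^2$, which I would decouple by a Hubbard--Stratonovich transformation, introducing a supermatrix $\sigma$ — of size $2\times 2$ for GUE, but larger ($4\times 4$, with a symmetry constraint from the doubling) for GOE. Integrating out the now-Gaussian superfields leaves a finite-dimensional integral
\[
\bigl\langle \tr (H-\E)^{-1}\bigr\rangle \;=\; \int d\sigma \; P(\sigma)\, e^{-N\,\mathcal{L}(\sigma)} ,
\]
where the action $\mathcal{L}$ is built from $\mathrm{Str}\,\sigma^2$ and $\mathrm{Str}\log(\sigma - E)$, the latter with coefficient $1$ for GUE and $\tfrac12$ for GOE (the half coming from the real bosonic integral), and $P$ is a rational pre-exponential factor that carries the observable. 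Everything then rests on the large-$N$ asymptotics of this integral.

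\textbf{Saddle-point analysis.} The saddle equation is $\sigma = (E - \sigma)^{-1}$, with roots $\sigma_\pm = \tfrac{E}{2} \pm i\sqrt{1 - E^2/4}$, so that $\mathrm{Im}\,\sigma_- = \pi\,\wig(E)$ already reproduces Wigner's law at leading order. The bosonic block of $\sigma$ is pinned to the convergent root, while the fermionic block may sit at $\sigma_+$ or $\sigma_-$ independently; the two possibilities give a ``diagonal'' contribution and an ``off-diagonal'' one. For GUE the diagonal contribution reproduces $\wig(E)$ with no $N^{-1}$ term (a supersymmetric cancellation), while the off-diagonal one carries a phase $N\bigl(E\sqrt{1-E^2/4}+2\arcsin\tfrac E2\bigr)$ — equal, up to a constant, to $N\int_E^2 2\pi\,\wig(x)\,dx$ — together with a sign $(-1)^N$, which is the oscillatory correction of Theorem~\ref{thm:gue}. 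For GOE the same two contributions appear, but the larger saddle manifold together with the coefficient $\tfrac12$ pushes the off-diagonal contribution down to order $N^{-3/2}$ (or kills its pre-exponential), whereas the diagonal contribution now \emph{does} carry a nonzero smooth $N^{-1}$ correction. Evaluating the Gaussian fluctuation superdeterminant of $\mathcal{L}$ around the diagonal saddle — after dividing out, by supersymmetry, the unit volume of the non-compact Goldstone sub-manifold — is what should produce exactly $-\bigl(4\pi^2 N\,\wig(E)\bigr)^{-1}$, which is the content of Theorem~\ref{thm:goe}.

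\textbf{Where the difficulty lies.} I expect the obstacle to be concentrated in the two GOE-specific points. First, fixing the correct integration domains — for the doubled bosonic fields and for the Hubbard--Stratonovich supermatrix $\sigma$ — so that every integral converges and the formal steps (completing squares, interchanging integrals, deforming contours) are legitimate; with real symmetry this is genuinely more delicate than in the GUE case. Second, the fluctuation computation around the saddle manifold for GOE: the Hessian of $\mathcal{L}$ on the massive modes has a more involved block structure, and the supersymmetric cancellation of the Goldstone volume must be implemented carefully so that a finite number comes out. Once these are under control, extracting the stated asymptotics and bounding the remainder uniformly on $|E| < 2 - \delta$ — by deforming to the steepest-descent contour and estimating the tail — should be routine.
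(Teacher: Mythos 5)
Your proposal identifies the right circle of ideas --- a supersymmetric dual representation for $\langle \tr\, G\rangle$ followed by steepest descent --- but it takes a genuinely different route through the supersymmetric formalism than the paper does, and it is an outline rather than a proof.

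On the route: you propose to decouple the quartic interaction by a Hubbard--Stratonovich (HS) transformation, introduce a supermatrix $\sigma$ subject to a time-reversal symmetry constraint, and analyse the $\sigma$-integral by saddle point. This is the scheme of Kalisch--Braak (non-rigorous) and, for GUE, of Disertori. The paper deliberately avoids HS: after averaging over $H$, it exploits the rotational invariance of the bosonic and Grassmann integrals directly, passing to polar-type coordinates (radial variables $s,t$ and a scalar-product variable $\alpha$ for the real and imaginary parts $\x,\y$ of the bosonic field, plus a single contour variable $z$ via a Grassmannian analogue of polar coordinates) to arrive at the explicit four-fold integral of Lemma~\ref{l:GOE}. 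One of the paper's stated contributions is precisely that this representation is simpler than the HS one and sidesteps the convergence questions you flag as your main difficulty. Both routes could in principle work, but they are distinct.

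On the gap: the work you describe under ``Where the difficulty lies'' is left undone, and it is where the theorem actually lives. You assert that ``evaluating the Gaussian fluctuation superdeterminant of $\mathcal{L}$ around the diagonal saddle \ldots\ is what should produce exactly $-(4\pi^2 N\wig(E))^{-1}$,'' but that computation is exactly what a proof must supply, and the mechanism is less obvious than you suggest. In the paper's coordinates the leading prefactor $\Phi_2$ \emph{vanishes} at the $(s,t)$-saddle, so the $1/N$ term arises from second-order $(s,t)$-fluctuations combined with the ``off-diagonal'' fermionic saddle $z = z_-$; the a priori oscillatory phase contributed by $z_-$ then cancels exactly against the $(-1)^N 2^{-N}$ factor coming from the $\alpha$-integral, which is why the GOE $1/N$ correction is smooth. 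Identifying this cancellation, and verifying that the contour deformations underlying the steepest-descent argument are legitimate (Claims~\ref{cl-f}, \ref{cl-g}, \ref{cl-al}), is the bulk of the proof. A minor further point: per the paper's Remark, the GOE oscillatory correction is $O(N^{-2})$, not $O(N^{-3/2})$ as you write.
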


\begin{rmk} The oscillatory term in the expansion corresponding to GOE is of order $N^{-2}$,
see Kalisch and Braak \cite{KB}. It can also be derived by our methods.
\end{rmk}

We also consider the interpolating ensembles of Mehta and Pandey \cite{MP}, which are given by
\[ \sqrt{r} \, \mathrm{GUE} + \sqrt{1-r} \, \mathrm{GOE}~, \quad 0 \leq r \leq 1~. \]
The case $r = 0$ corresponds to the GOE, whereas $r=1$ corresponds to the GUE (in the notation of \cite{MP}, $r = \alpha^2$). More 
explicitly, 
\begin{equation}\label{eq:cov'} \langle H_{ij} H_{kl} \rangle =
 \frac{1}{N} \left( \delta(jk) \delta(il) + (1-r) \delta(ik) \delta(jl)\right)~.
\end{equation}
We prove:
\begin{thm}\label{thm:interp}
For the interpolating ensemble (\ref{eq:cov'}), 
\[\begin{split} \rho(E) &=
\wig(E) - 
\frac{(-1)^N r }{4\pi^3N\wig(E)^2} \Re \left\{ \frac{e^{-N \left(i E \sqrt{1-E^2/4} 
	+ 2 \ln(iE/2 + \sqrt{1-E^2/4})\right)}}{(1-r)(1-E^2/2 - iE \sqrt{1-E^2/4}) + 1} \right\}\\
  &\quad- \frac{1-r}{2\pi^2 N \wig(E)}\Re \Big\{ (2 \sqrt{1-E^2/4}-r(-iE/2 + \sqrt{1-E^2/4})) \\
	&\qquad\times \frac{(-iE/2+\sqrt{1-E^2/4})^3}{((1-r)(1-E^2/2-iE\sqrt{1-E^2/4})+1)^2} \Big\} + O(N^{-3/2})
\end{split} \]
for $|E| < 2-\delta$, and the implicit constant in the $O$-notation depends only on $\delta >0$.
\end{thm}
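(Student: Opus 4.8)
The plan is to run the supersymmetric machinery that underlies the proofs of Theorems~\ref{thm:gue} and~\ref{thm:goe}, inserting the parameter $r$ at the appropriate places and checking that the analysis survives. I would start from the representation of the density of states through the averaged resolvent,
\[ \rho(E) = -\frac{1}{\pi}\lim_{\epsilon\downarrow 0}\Im\left\langle \frac1N \tr (H-E-i\epsilon)^{-1}\right\rangle~, \]
together with the supersymmetric formula (as in Disertori's treatment of GUE) writing $\langle \frac1N\tr(H-z)^{-1}\rangle$ as a Gaussian superintegral over a superfield $\Phi$ (one boson and one fermion per site) with a source producing the trace. Performing the Gaussian average over $H$ with the covariance~(\ref{eq:cov'}) generates a quartic self-interaction of $\Phi$ with two channels: a \emph{direct} channel of weight $1$ (identical to the GUE term) and a \emph{transpose} (time-reversal) channel of weight $1-r$ (the GOE-specific term). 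Decoupling both channels by a Hubbard--Stratonovich transformation introduces a supermatrix $Q$ in the charge-conjugation-doubled space, with $r$ entering as a parameter; integrating out $\Phi$ reduces the problem to a finite-dimensional integral $\int dQ\, P_E(Q)\, e^{-N\mathcal L_E(Q)}$ with explicit free energy $\mathcal L_E$ and prefactor $P_E$, the latter arising from the residual Gaussian integrations in the source variables.

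Next I would evaluate this integral by steepest descent. The dominant saddle $Q_*$ is the one already present for GUE and GOE; it is governed by the scalar self-consistency equation of the semicircle law, with solution $m(E)=-E/2+i\sqrt{1-E^2/4}$ (branch chosen so $\Im m>0$; equivalently, the quantity $-iE/2+\sqrt{1-E^2/4}=i\,\overline{m(E)}=e^{-i\arcsin(E/2)}$ that recurs throughout the statement), and it reproduces the leading term $\wig(E)$. The $1/N$ correction then splits into a non-oscillatory and an oscillatory part. The non-oscillatory part $-\frac{1-r}{2\pi^2N\wig(E)}\Re\{\cdots\}$ comes from the ratio of Gaussian fluctuation determinants at $Q_*$ together with the first nontrivial term of the Taylor expansion of $P_E$ around $Q_*$; the denominator $(1-r)(1-E^2/2-iE\sqrt{1-E^2/4})+1$ and its square are exactly the masses of the transpose-channel fluctuation modes, which degenerate to $1$ at $r=1$ and to $1+e^{-2i\arcsin(E/2)}$ at $r=0$. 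The oscillatory part $\frac{(-1)^Nr}{4\pi^3N\wig(E)^2}\Re\{\cdots\}$ comes from a \emph{subdominant} saddle $\bar Q_*$, related to $Q_*$ by the symmetry of $\mathcal L_E$ that exchanges the two roots of the semicircle quadratic; its action exceeds that at $Q_*$ by $N$ times the purely imaginary quantity $iE\sqrt{1-E^2/4}+2\ln(iE/2+\sqrt{1-E^2/4})=i\big(E\sqrt{1-E^2/4}+2\arcsin(E/2)\big)$, which is exactly the displayed exponent, while the $(-1)^N$ and the prefactor $r$ (so that this term vanishes at $r=0$, consistently with the Remark) come out of the fluctuation integral at $\bar Q_*$. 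Adding the two saddle contributions and taking the real part yields the stated formula, the error $O(N^{-3/2})$ being the next order of the stationary-phase expansion. As a consistency check one verifies that $r=1$ and $r=0$ collapse this expression to Theorems~\ref{thm:gue} and~\ref{thm:goe} (using $1+w^2=2\cos\theta\,w$ with $w=e^{-i\theta}$, $\theta=\arcsin(E/2)$, in the GOE case).

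The step I expect to be the main obstacle is the detailed treatment of the supermatrix $Q$ and of the Gaussian fluctuations for generic $r$: for $r=1$ the superintegral collapses to an elementary two-dimensional contour integral, and for $r=0$ it factorizes along the compact and non-compact directions, but for $0<r<1$ the parameter $r$ couples the direct and transpose sectors, so one must diagonalize the full quadratic form at each saddle, keep track of which fluctuation modes are massive and which are nearly flat, and handle the anticommuting directions so that the bosonic and fermionic Gaussian determinants assemble into the rational functions of $E$ and $r$ above. A second, more routine but still essential point is the rigorous justification: one must deform the $Q$-contour through $Q_*$ and $\bar Q_*$ along paths of steepest descent, bound the remaining part of the contour by $e^{-cN}$ \emph{uniformly} for $|E|\le 2-\delta$ and $r\in[0,1]$, and control the order of limits $\epsilon\downarrow 0$ then $N\to\infty$ --- the same estimates as in Theorems~\ref{thm:gue} and~\ref{thm:goe}, now carried out with $r$ as an additional parameter.
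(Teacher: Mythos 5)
Your proposal is correct in its broad scheme and sanity checks, but it takes a genuinely different route from the paper. You propose the Hubbard--Stratonovich reduction to a supermatrix integral $\int dQ\, P_E(Q)\,e^{-N\mathcal L_E(Q)}$ over a charge-conjugation-doubled superspace, followed by a supermatrix saddle-point with a dominant saddle $Q_*$ (non-oscillatory $1/N$ correction from the fluctuation Berezinian) and a subdominant saddle $\bar Q_*$ (oscillatory term, weight $\propto r$). This is the Kalisch--Braak/Disertori route, and the paper explicitly chooses to avoid it: instead it integrates out the Grassmann variables early, using an orthogonal rotation to localize $\bpsi$ onto two Grassmann coordinates and a ``Grassmann polar coordinates'' identity to convert the remaining fermionic integral into an ordinary contour integral, producing the explicit four-dimensional scalar integral of Lemma~\ref{l:interp} (over $s,t\in(0,\infty)$, $\alpha\in(-1,1)$, $z$ on a contour). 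The ensuing saddle-point analysis is then entirely classical (Fedoryuk-type), not a superintegral saddle-point; the proof of Theorem~\ref{thm:interp} is then a near-verbatim copy of the proof of Theorem~\ref{thm:goe}. What your route buys is transparency about the mechanism (the transpose-channel mass $(1-r)w^2+1$ with $w=e^{-i\arcsin(E/2)}$ is very clearly displayed, and the $r=0,1$ degenerations are manifest); what the paper's route buys is a shorter path to a rigorous result, since deforming the contour of a supermatrix $Q$ through a subdominant saddle and controlling the Grassmann-odd directions rigorously has, to my knowledge, not been carried out even for GOE alone. You correctly flag this as the main obstacle; to make your proposal a proof you would need to supply precisely the supermatrix contour-deformation estimates, uniformly in $r\in[0,1]$ and $|E|\le 2-\delta$, which is a substantial additional project and is exactly the work that the paper's integral representation is designed to bypass. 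Your algebraic identifications (the exponent $iE\sqrt{1-E^2/4}+2\ln(iE/2+\sqrt{1-E^2/4})=i(E\sqrt{1-E^2/4}+2\arcsin(E/2))$, the collapse to Theorems~\ref{thm:gue} and~\ref{thm:goe} at $r=1,0$ via $1+w^2=2\cos\theta\,w$) are correct and are a useful check on the stated formula.
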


\vspace{2mm}\noindent
These results are based on a saddle-point analysis of the exact integral formula{\ae} for $\rho(E)$,
which we prove (in Lemmata~\ref{l:GUE}, \ref{l:GOE}, and  \ref{l:interp} below) using the supersymmetric formalism. The
supersymmetric formalism, put forth by Berezin (see \cite{B} for an early application to Wigner matrices)
and developed in the works of Wegner and Efetov, is a very general method to derive dual integral
representations for expressions such as an average product of several matrix elements of the resolvent.
While widely applied in the physical literature, only a fraction of these applications have been put
on rigorous mathematical basis.

On the other hand, the supersymmetric method is potentially applicable to a wide range of problems
pertaining to the spectral properties of random matrices and random operators; see the review of Spencer \cite{Sp}.

Two of the alternative groups of methods to study the eigenvalue distribution of random matrices
are perturbative methods (such as the moment method), and the method of orthogonal polynomials.
The moment method was applied by Wigner in the 1950's to prove the weak convergence of the
spectral distribution to the semicircle law $\wig$. A major disadvantage of all perturbative methods is
that they typically allow to control the density of states at some scale $\epsilon \sim N^{-\kappa}$,
i.e.\ they do not allow to take $\epsilon \to +0$ while keeping $N$ fixed (moreover, usually $\kappa < 1$,
so the perturbative methods are unable to see the oscillatory corrections to $\wig$). The
supersymmetric method allows to derive exact formul{\ae} for fixed $N$ and $\epsilon \to +0$.

The method of orthogonal polynomials, developed in the 1960's by Dyson, Gaudin, and Mehta (see
the book of Mehta \cite{M}), allows to compute the asymptotics of the density of states in the strong
sense and to arbitrary precision. For example, the asymptotic expansions of Theorems~\ref{thm:gue}
and \ref{thm:goe} (as well as analogous expansions for several other ensembles) were derived by
Forrester, Frankel, and Garoni \cite{FFG1,FFG2}. Theorem~\ref{thm:interp} can probably be extracted
via asymptotic analysis from formula (4.52) of Mehta and Pandey \cite{MP}; see also \S~5 there.

A vast generalization of Theorems~\ref{thm:gue}
and \ref{thm:goe} was obtained by Desroisiers and Forrester \cite{DF}, who considered general
$\beta$-ensembles (with arbitrary $\beta > 0$, where $\beta = 1$ corresponds to GOE and
$\beta = 2$ --- to GUE; the interpolating ensembles (\ref{eq:cov'}) are however not a special
case of $\beta$-ensembles). Their work is based on the study of multivariate Hermite polynomials.

On the other hand, the potential range of applicability of the supersymmetric method seems to include
many problems beyond the applicability of the orthogonal polynomial method (and even the method of multivariate
orthogonal polynomials); see \cite{Sp}.

\vspace{2mm}\noindent
Thus the supersymmetric method has several advantages over both perturbative methods and the
method of orthogonal polynomials. The applications found during the last dozen years include an
analysis of the density of states of a 3D band matrix model by Disertori, Pinson, and Spencer \cite{DPS},
and the study of mixed moments of characteristic polynomials for a class of 1D band matrices by
T.~Shcherbina \cite{Sh}; see \cite{Sp} for a review of other results.

Kalisch and Braak \cite{KB} used the supersymmetric formalism to derive a formula for GOE, GUE
and GSE density of states, and then applied saddle-point analysis to derive the asymptotics of
Theorems~\ref{thm:gue} and \ref{thm:goe}. Their work is however on the physical level of rigor.
A mathematically rigorous derivation of Theorem~\ref{thm:gue} (as well as of its counterparts at the
spectral edges) was given by Disertori \cite{D}. The derivation of the integral formul{\ae} by Kalisch--Braak
and Disertori is based on the Hubbard-Stratonovich transformation.

Our contribution is three-fold. First, we derive a integral representation for the density of states
using a different supersymmetric approach; our formal{\ae} (Lemmata \ref{l:GUE}, \ref{l:GOE}, 
and \ref{l:interp})
seem simpler than the ones obtained via the Hubbard--Stratonovich transformation. We mention
that a different approach avoiding the Hubbard--Stratonovich transformation was developed
by Fyodorov in \cite{F}.

Second, we perform a mathematically rigorous saddle-point analysis of both formal{\ae} to derive the
asymptotic expansions of Theorems~\ref{thm:gue}, \ref{thm:goe}, and \ref{thm:interp}, thus extending the work
of Disertori to GOE and to the interpolating ensembles. Although the results (at least, those
pertaining to GUE and GOE) are not new, we believe that the methods can be applied
to other problems intractable by other means; thus our third goal is a detailed and (relatively)
self-contained exposition.

\vspace{2mm}\noindent
We remark that the method of the current paper can be probably applied to other Gaussian
ensembles. As an example, we mention the anti-symmetric Hermitian ensemble of Mehta and Rosenzweig \cite{MR}, corresponding to $r = 2$ in (\ref{eq:cov'}).

\vspace{3mm}\noindent
To  state the integral formul{\ae}, we need some notation.
Let $E_\epsilon = E - i\epsilon$ and
\[ G(z) = (z - H)^{-1}~; \quad G(z; m,n) = (z-H)^{-1}(m,n)~. \]
The density of states can be expressed in terms of $G$ as follows:
\begin{equation}\label{eq:rhoG}
 \rho(E) = \Im \lim_{\epsilon \to +0} \frac{1}{\pi N} \tr \, \langle G(E_\epsilon) \rangle~.
\end{equation}
The supersymmetric formalism is used to derive a (dual) integral representation for $\langle \tr \, G(E_\epsilon) \rangle$.

\begin{lemma}\label{l:GUE} For GUE,
\begin{multline*}
\frac{1}{N}\langle \tr \, G(E_\epsilon) \rangle
  = \frac{(-1)^{N-1}N}{2\pi} \int_0^\infty ds \oint dz (iE - z + s)\\
  \exp \left[ - N (iE_\epsilon s + \frac{1}{2}s^2 - \ln s) \right]
  \exp \left[ - N (iE_\epsilon z - \frac{1}{2}z^2 + \ln z) \right] ~,
\end{multline*}
whereas
\begin{multline*}
i = \langle i \rangle
  = \frac{(-1)^{N-1}N}{2\pi} \int_0^\infty ds \oint dz \frac{iE - z + s}{s}\\
  \exp \left[ - N (iE_\epsilon s + \frac{1}{2}s^2 - \ln s) \right]
  \exp \left[ - N (iE_\epsilon z - \frac{1}{2}z^2 + \ln z) \right] ~.
\end{multline*}
\end{lemma}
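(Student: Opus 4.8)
The plan is to realize $\tr G(E_\epsilon)$ as a Gaussian superintegral, to average over $H$, and then to collapse the resulting $N$-dimensional (super)integral onto the two variables $s$ and $z$ by a change to collective variables, never passing through a Hubbard--Stratonovich auxiliary field. First I would record the supersymmetric representation. Since $\Im E_\epsilon = -\epsilon < 0$, for fixed $H$ the oscillatory Gaussian integral over a bosonic vector $\phi \in \mathbb{C}^N$ converges and equals a constant times $\det(E_\epsilon - H)^{-1}$, while the Berezin integral over a Grassmann vector $\chi$ equals a constant times $\det(E_\epsilon - H)$; hence their product is independent of $H$, and inserting $\bar\phi\phi$ in front of the exponential manufactures $\tr G(E_\epsilon)$:
\[ \tr G(E_\epsilon) = c_N \int D\phi\, D\chi \; (\bar\phi\phi)\, \exp\bigl[-i\bar\phi(E_\epsilon - H)\phi - i\bar\chi(E_\epsilon - H)\chi\bigr]. \]
Equivalently, and with cleaner bookkeeping for the preexponential $iE - z + s$, one may start from $\tr G(E_\epsilon) = \partial_u\big|_{u=E_\epsilon}\langle \det(u - H)/\det(E_\epsilon - H)\rangle$ and represent the numerator by the Grassmann integral and the denominator by the bosonic one.

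Next I would take the expectation. After interchanging $\langle\cdot\rangle$ with the superintegral --- legitimate because for $\epsilon > 0$ the integrand is dominated uniformly in $H$ on each ball $\{\|H\|\le R\}$, and one then lets $R \to \infty$ --- one carries out the Gaussian average, using $\langle \exp(i\tr HM)\rangle = \exp(-\tfrac{1}{2N}\tr M^2)$ for GUE. With $M$ the superbilinear built from $\phi$ and $\chi$ this turns the $H$-dependence into a quartic self-interaction, and since $\tr M^2 = (\bar\phi\phi)^2 - (\bar\chi\chi)^2 + (\text{mixed bose--fermi terms})$, this average is precisely what produces the $+\tfrac12 s^2$ (bosonic) and $-\tfrac12 z^2$ (fermionic) terms in the final exponent; the linear terms $iE_\epsilon s$ and $iE_\epsilon z$ descend from the $E_\epsilon$-pieces of the original exponent.

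I would then pass to collective variables. Apart from the mixed terms, the integrand depends on $\phi$ only through $a = \bar\phi\phi$ and on $\chi$ only through $b = \bar\chi\chi$; introducing $s \propto a$ and a nilpotent partner for $b$ and integrating out the remaining angular degrees of freedom --- a computation which, for one bosonic and one fermionic flavour, can be done by hand rather than by invoking the general superbosonization formula --- produces a Berezinian of the form $(s/z)^N$ up to measure factors, which is the source of the $+N\ln s$ and $-N\ln z$ in the exponent; the bosonic radial measure together with the $\bar\phi\phi$ insertion account for the power $s^N$ and for one factor of $N$ in the prefactor, while the Grassmann integration delivers the contour integral $\oint dz$ (a small, positively oriented circle around $0$) and the factor $z^{-N}$. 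The mixed bose--fermi bilinears are removed by a linear shift of variables --- this is the step where supersymmetry, the equality of the numbers of bosonic and fermionic flavours, does the real work --- and the shift together with the preexponential insertion assembles into the factor $iE_\epsilon - z + s$, which becomes $iE - z + s$ as $\epsilon \to 0$. Collecting all normalization constants gives $\frac{(-1)^{N-1}N}{2\pi}$. The companion identity $i = \langle i\rangle$ is the same computation run on the normalization: it is the $\langle\det(u-H)/\det(E_\epsilon - H)\rangle$-representation evaluated at $u = E_\epsilon$ \emph{before} the $u$-derivative, which equals $1$, multiplied by the $i$ of (\ref{eq:rhoG}); this is why its integrand differs from the first only by the extra factor $1/s$. (It is useful later because the bosonic saddle sits at $s = 1$, so subtracting $i$ from $\frac1N\langle\tr G(E_\epsilon)\rangle$ tames the integrand near the saddle.)

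The genuinely delicate points are analytic rather than combinatorial. The main obstacle is to make the non-absolutely-convergent bosonic integral, and its interchange with the expectation, rigorous: one must keep $\epsilon > 0$ throughout, choose the contours and the order of integration so that each step is justified, and only at the very end pass to $\epsilon \to 0$ inside (\ref{eq:rhoG}). A secondary but still fiddly point is to carry out the collective-variable reduction --- controlling the mixed bose--fermi sector and pinning down every sign, power of $i$, and normalization constant --- precisely enough to land on the stated formula on the nose, rather than up to an unidentified constant.
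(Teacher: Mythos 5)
Your overall plan coincides with the paper's: start from the supersymmetric identity for $(H-E_\epsilon)^{-1}(k,j)$ (the paper's (\ref{eq:ident})), average over $H$ using $\langle e^g\rangle = e^{\langle g^2\rangle/2}$ for the Gaussian form $g = i[\z,H\z]+i[\bpsi,H\bpsi]$, and then collapse onto two collective variables, with the radial part of the bosonic integral giving the $s$-integral and the factor $s^N$, and the Grassmann ``polar coordinates'' formula
\[\int D\tphi\, F([\tphi,\tphi]) = (-1)^{N-1}\frac{(N-1)!}{2\pi i}\oint\frac{F(z)}{z^N}\,dz\]
giving the $\oint dz$ and the $z^{-N}$. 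That part of your sketch is correct and is the paper's route; you are also right that for $\epsilon>0$ the bosonic integral has a Gaussian damping $e^{-\epsilon[\z,\z]}$, so Fubini handles the interchange with $\langle\cdot\rangle$.

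The genuine gap is in how you handle the mixed bose--fermi quartic that the $H$-average produces. For GUE the covariance gives
\[\langle([\z,H\z]+[\bpsi,H\bpsi])^2\rangle = \tfrac1N\Big([\z,\z]^2 - [\bpsi,\bpsi]^2 + 2\sum_{j,k} z_j\bar\psi_j\,\bar z_k\psi_k\Big),\]
and the cross term $\sum_j z_j\bar\psi_j\sum_k\bar z_k\psi_k$ is \emph{quadratic} in the Grassmann variables with a $\z$-dependent coefficient. A ``linear shift of variables'' cannot remove a Grassmann bilinear --- a shift only eliminates terms linear in $\psi,\bar\psi$ --- so the step you lean on (``the mixed bose--fermi bilinears are removed by a linear shift of variables\ldots this is where supersymmetry does the real work'') would fail. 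What the paper does instead is a \emph{rotation}: choose a unitary $U$ with first row $\bar\z/|\z|$ and set $\bphi = U\bpsi$; then the rank-one cross term localizes to a single Grassmann pair, $\sum z_j\bar\psi_j\,\bar z_k\psi_k = [\z,\z]\,\bar\phi_1\phi_1$, while $[\bpsi,\bpsi]$ and $D\bpsi$ are invariant. One then integrates out $\phi_1,\bar\phi_1$ \emph{explicitly} (not by a shift), and this is precisely what manufactures the preexponential factor
\[iE_\epsilon + \tfrac1N[\z,\z] - \tfrac1N[\tphi,\tphi],\]
which after the polar/contour reduction and the rescaling $s\mapsto Ns$, $z\mapsto Nz$ becomes the $(iE - z + s)$ in the statement. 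Your sketch attributes this factor to ``the shift together with the preexponential insertion,'' which mislocates its origin. Relatedly, your proposed order of operations --- reduce to collective variables first, then dispose of the mixed term --- cannot be carried out as stated, because the mixed term is exactly what couples the angular degrees of freedom of $\z$ and $\bpsi$ and so obstructs the factorized reduction; one must diagonalize it (the rotation) and integrate out the offending Grassmann pair \emph{before} the radial/contour reduction, as the paper does. The second identity ($i=\langle i\rangle$) you explain correctly: it is the same computation without the $[\z,\z]$ insertion, hence the extra $1/s$.
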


\begin{lemma}\label{l:GOE} For GOE,
\begin{multline*}
\frac{1}{N}\langle \tr \, G(E_\epsilon) \rangle
  = \frac{(-1)^{N}2^NN^2}{8\pi^2} \int_0^\infty ds \int_0^\infty dt \int_{-1}^1 d\alpha \oint dz\\
  z \frac{s+t}{st}(1-\alpha^2)^{-3/2} \left[ \frac{1}{N} + (z-iE_\epsilon)^2 - 2 (z-iE_\epsilon)(s+t)
    + 4st(1-\alpha^2)\right]\\
  \exp \left[ - N (iE_\epsilon (s+t) + s^2 + t^2 - \frac{1}{2}\ln s - \frac{1}{2}\ln t) \right] \\
  \exp \left[ - N (iE_\epsilon z - \frac{1}{2}z^2 + \ln z) \right]
  \exp \left[ -N(2st \alpha^2 - \frac{1}{2} \ln(1-\alpha^2))\right]~,
\end{multline*}
whereas
\begin{multline*}
i = \langle i \rangle
= \frac{(-1)^{N}2^NN^2}{8\pi^2} \int_0^\infty ds \int_0^\infty dt \int_{-1}^1 d\alpha \oint dz\\
  z \frac{1}{st}(1-\alpha^2)^{-3/2} \left[ \frac{1}{N} + (z-iE_\epsilon)^2 - 2 (z-iE_\epsilon)(s+t)
    + 4st(1-\alpha^2)\right]\\
  \exp \left[ - N (iE_\epsilon (s+t) + s^2 + t^2 - \frac{1}{2}\ln s - \frac{1}{2}\ln t) \right] \\
  \exp \left[ - N (iE_\epsilon z - \frac{1}{2}z^2 + \ln z) \right]
  \exp \left[ -N(2st \alpha^2 - \frac{1}{2} \ln(1-\alpha^2))\right]~.
\end{multline*}
\end{lemma}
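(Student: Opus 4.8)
The plan is to derive the GOE formula in parallel with Lemma~\ref{l:GUE}, changing only what the orthogonal symmetry class forces us to change. I would begin from \eqref{eq:rhoG} and represent the matrix element $G(E_\epsilon;m,n)$ as a Gaussian superintegral: integrating an $N$-component complex bosonic vector $\bphi$ against $\exp[-i\langle\bphi,(E_\epsilon-H)\bphi\rangle]$ produces $G(E_\epsilon;m,n)$ divided by $\det[i(E_\epsilon-H)]$ (the integral converges because $\epsilon>0$), and a companion Grassmann integral over a vector $\bpsi$ against $\exp[-i\langle\bpsi,(E_\epsilon-H)\bpsi\rangle]$ supplies exactly that determinant, turning the ratio into a single superintegral with no denominator. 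Taking $m=n$ and summing inserts the quadratic form $\|\bphi\|^2$, so that $\langle\tr G(E_\epsilon)\rangle$ equals, up to an explicit constant, the $H$-average of this superintegral with the insertion $\|\bphi\|^2$. Since $H$ is real symmetric it is natural to pass in addition to the real and imaginary parts of $\bphi$ and $\bpsi$ (a \emph{doubling} that is essentially forced in the orthogonal class); this is what ultimately produces the $2\times 2$ bosonic block and the two radial variables $s,t$.

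Next I would carry out the Gaussian average over $H$. By \eqref{eq:cov}, $\langle\exp[i\langle\bphi,H\bphi\rangle+i\langle\bpsi,H\bpsi\rangle]\rangle=\exp[-\tfrac12\langle(\cdot)^2\rangle]$, and evaluating the second moment with the GOE covariance produces a quartic term in $(\bphi,\bpsi)$. Unlike in the GUE case, the exchange part $\delta(ik)\delta(jl)$ of the covariance contributes in addition to the direct part $\delta(jk)\delta(il)$; it brings in the ``charge-two'' invariants such as $\sum_i\phi_i^2$ together with their counterparts under the doubling, and it is precisely this that is responsible for the richer structure of the GOE formula. The quartic term depends only on the $O(1)$ bilinear invariants of $(\bphi,\bpsi)$, which organize into a $2\times 2$ real symmetric positive-definite bosonic block $Q_B$ and a fermionic block.

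The key step is then to pass from the integral over the supervector to an integral over these invariants. Rather than the Hubbard--Stratonovich transformation, I would use the superbosonization identity --- equivalently, Fyodorov's device \cite{F} of integrating out the ``angular'' degrees of freedom of the supervector and retaining only the invariants. The bosonic block $Q_B$ is parametrized by its eigenvalues $s,t>0$ and a further parameter $\alpha\in(-1,1)$, the Jacobian supplying the factor $\tfrac1{st}(1-\alpha^2)^{-3/2}$ (common to both formulae), while the fermionic block collapses to a single contour integral $\oint dz$, exactly as for GUE. The determinantal factors $\det^{\pm N}$ from the Gaussian integration, combined with the quadratic terms in the invariants, assemble into the three exponentials of the statement: $iE_\epsilon(s+t)+s^2+t^2-\tfrac12\ln s-\tfrac12\ln t = iE_\epsilon\tr Q_B+\tr Q_B^2-\tfrac12\ln\det Q_B$ from the bosonic block (with $s^2+t^2$ in place of the $\tfrac12 s^2$ of GUE because of the symmetry factor in \eqref{eq:cov} and because $s,t$ are the eigenvalues of $Q_B$ rather than of its square root), $2st\alpha^2-\tfrac12\ln(1-\alpha^2)$ from the off-diagonal coupling and the angular Jacobian, and $iE_\epsilon z-\tfrac12 z^2+\ln z$ from the fermionic block, which turns out to take the same form as for GUE. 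The polynomial bracket $\tfrac1N+(z-iE_\epsilon)^2-2(z-iE_\epsilon)(s+t)+4st(1-\alpha^2)$ likewise comes from the determinantal/Jacobian factors of the superbosonization --- it is present already in the companion identity, where there is no resolvent --- while the insertion $\|\bphi\|^2$ contributes only the extra multiplicative factor $(s+t)=\tr Q_B$. Collecting all numerical factors yields the prefactor $(-1)^N 2^N N^2/(8\pi^2)$, and the companion identity $i=\langle i\rangle$ is obtained from the same superintegral with the trivial insertion in place of $\|\bphi\|^2$; it also serves to fix (or check) the normalization constant.

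I expect the main obstacle to be the rigorous justification of the superbosonization step in the orthogonal class. One must verify that the passage to the invariants $Q_B$ and $z$ is exact, with no spurious boundary terms from the Grassmann integration; identify the correct integration domains for $s,t,\alpha$ and the correct contour and orientation for $z$ (the latter carrying the sign $(-1)^N$); and control the interchange of the $H$-average, the supervector integral, and the limit $\epsilon\to+0$ in \eqref{eq:rhoG}, which requires uniform integrability estimates near $s,t\to 0$ and $\alpha\to\pm1$, where the weight $\tfrac1{st}(1-\alpha^2)^{-3/2}$ is singular. A secondary but still nontrivial point is the bookkeeping of signs, powers of $2$ and $N$, and the precise form of the polynomial insertion through the chain of substitutions; here comparison with the GUE formula of Lemma~\ref{l:GUE} provides a useful consistency check.
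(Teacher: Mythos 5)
Your overall strategy (Gaussian superintegral for the resolvent, Gaussian average over $H$, then reduce the $N$-dimensional integral to an integral over a small number of invariants) is the right one, and you correctly identify that the orthogonal symmetry class forces a ``doubling'' and hence a $2\times 2$ bosonic block. But your route differs from the paper's, and contains an internal inconsistency that would have to be fixed before it could be completed.

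The paper does \emph{not} invoke superbosonization or Fyodorov's device. Instead it proceeds elementarily: write $\z=\x+i\y$ (only the \emph{bosonic} vector is split into real and imaginary parts --- the Grassmann vector $\bpsi$ is not), find an orthogonal map sending $\x,\y$ into the span of $\e_1,\e_2$, and rotate $\bpsi\mapsto\bphi$ accordingly; then explicitly integrate out $\phi_1,\bar\phi_1,\phi_2,\bar\phi_2$ by Taylor-expanding the exponential (this is what produces the polynomial bracket $\tfrac1N+(z-iE_\epsilon)^2-\cdots$); then integrate the remaining Grassmann vector $\tphi$ using the polar formula~(\ref{eq:phipolar}) to get the $\oint dz$; and finally pass to polar coordinates in $\x$ and $\y$, with $\alpha$ the inner product of the two direction vectors. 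The elementary route yields the exact identity with no appeal to a general superbosonization theorem in the orthogonal class, which is one of the announced points of the paper.

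Your proposal also has a concrete mistake: you assert that $s,t$ are the eigenvalues of the bosonic block $Q_B$ and that the measure is $\tfrac1{st}(1-\alpha^2)^{-3/2}\,ds\,dt\,d\alpha$. These two statements are mutually inconsistent --- a parametrization of a $2\times2$ real symmetric matrix by eigenvalues and a rotation angle carries a Vandermonde factor $|s-t|$, which is absent from the stated Jacobian. In the paper's formula $s,t$ are (after rescaling) the \emph{diagonal} entries $|\x|^2,|\y|^2$ of the Gram matrix, and $\alpha=[\x,\y]/(|\x|\,|\y|)$ is the normalized off-diagonal; the factor $(1-\alpha^2)^{(N-3)/2}$ is the angular density of $[\alpha,\beta]$ on $S^{N-1}\times S^{N-1}$, with the residual $(1-\alpha^2)^{-3/2}$ appearing after $\exp\bigl[\tfrac N2\ln(1-\alpha^2)\bigr]$ is extracted. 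Likewise, your suggestion to split $\bpsi$ into real and imaginary parts would lead to a $2\times2$ fermionic block and a correspondingly larger dual integral, not the single scalar contour $\oint dz$; the paper avoids this by leaving $\bpsi$ complex, rotating it, and reducing to $[\tphi,\tphi]$ via~(\ref{eq:phipolar}). If you want to pursue the superbosonization route rather than the paper's elementary one, you would need to use the diagonal-entry parametrization to match the lemma's Jacobian, keep a $1|0$-dimensional fermionic sector, and supply a rigorous superbosonization identity for the orthogonal class --- a substantial additional ingredient the paper does not need.
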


\begin{lemma}\label{l:interp} For the interpolating ensemble (\ref{eq:cov'}),
\begin{multline*}
\frac{1}{N}\langle \tr \, G(E_\epsilon) \rangle
  = \frac{(-1)^{N}2^NN^2}{8\pi^2} \int_0^\infty ds \int_0^\infty dt \int_{-1}^1 d\alpha \oint dz\\
  z \frac{s+t}{st}(1-\alpha^2)^{-3/2} \left[ \frac{1}{N} + (z-iE_\epsilon)^2 - (2-r) (z-iE_\epsilon)(s+t)
    + 4(1-r)st(1-\alpha^2)\right]\\
  \exp \left[ - N (iE_\epsilon (s+t) + (1 - \frac{r}{2}) (s^2 + t^2) + rst  - \frac{1}{2}\ln s - \frac{1}{2}\ln t) \right] \\
  \exp \left[ - N (iE_\epsilon z - \frac{1}{2}z^2 + \ln z) \right]
  \exp \left[ -N(2(1-r)st \alpha^2 - \frac{1}{2} \ln(1-\alpha^2))\right]~,
\end{multline*}
whereas
\begin{multline*}
i = \langle i \rangle 
  = \frac{(-1)^{N}2^NN^2}{8\pi^2} \int_0^\infty ds \int_0^\infty dt \int_{-1}^1 d\alpha \oint dz\\
  z \frac{1}{st}(1-\alpha^2)^{-3/2} \left[ \frac{1}{N} + (z-iE_\epsilon)^2 - (2-r) (z-iE_\epsilon)(s+t)
    + 4(1-r)st(1-\alpha^2)\right]\\
  \exp \left[ - N (iE_\epsilon (s+t) + (1 - \frac{r}{2}) (s^2 + t^2) + rst  - \frac{1}{2}\ln s - \frac{1}{2}\ln t) \right] \\
  \exp \left[ - N (iE_\epsilon z - \frac{1}{2}z^2 + \ln z) \right]
  \exp \left[ -N(2(1-r)st \alpha^2 - \frac{1}{2} \ln(1-\alpha^2))\right]~.
\end{multline*}
\end{lemma}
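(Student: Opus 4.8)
The plan is to derive Lemma~\ref{l:interp} by the supersymmetric computation that proves Lemma~\ref{l:GOE}, of which it is the case $r=0$, carrying the parameter $r$ through unchanged. First I would represent the diagonal resolvent entry as a Gaussian superintegral: fixing $\epsilon>0$ so that $\Im E_\epsilon<0$ and the bosonic Gaussian below converges, one has, for each $m$,
\[ G(E_\epsilon;m,m) \;=\; i \int \bar\phi_m\phi_m\, \exp\left[-i\bar\phi(E_\epsilon-H)\phi - i\bar\chi(E_\epsilon-H)\chi\right] d\phi\, d\bar\phi\, d\chi\, d\bar\chi~, \]
with $\phi\in\mathbb C^N$ a commuting vector and $\chi,\bar\chi$ anticommuting: the $\phi$-integral produces $\det(E_\epsilon-H)^{-1}$ times the matrix element and the $\chi$-integral produces $\det(E_\epsilon-H)$, so that with $\bar\phi_m\phi_m$ replaced by $1$ the superintegral is independent of $H$. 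Summing over $m$ and using (\ref{eq:rhoG}) expresses $\tfrac1N\langle\tr G(E_\epsilon)\rangle$ through $\langle\int\bar\phi\phi\,(\cdots)\rangle$, while the same computation with $\bar\phi\phi$ omitted yields the companion normalisation identity $i=\langle i\rangle$ in the lemma; I would use the latter at the end to pin down the overall constant together with the signs of the $i$'s and the orientation of the $z$-contour, rather than tracking every Gaussian normalisation by hand.

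Next I would average over $H$. Since $H$ is centred Gaussian, $\langle\exp[i\sum_{ij}H_{ij}M_{ij}]\rangle = \exp[-\tfrac12\sum_{ijkl}\langle H_{ij}H_{kl}\rangle M_{ij}M_{kl}]$ with $M_{ij}=\bar\phi_i\phi_j+\bar\chi_i\chi_j$, and inserting the covariance (\ref{eq:cov'}) produces the two quartic self-interaction terms $-\tfrac1{2N}\sum_{ij}M_{ij}M_{ji}$ and $-\tfrac{1-r}{2N}\sum_{ij}M_{ij}M_{ij}$. The key point --- which is precisely what lets one bypass the Hubbard--Stratonovich transformation --- is that $M$ is of very low rank, so both quartic terms are already polynomials in a small number of scalar invariants ($\bar\phi\phi$, $\phi\!\cdot\!\phi$, $\bar\phi\!\cdot\!\bar\phi$ and their $\chi$- and mixed analogues), most of which truncate by nilpotency; no auxiliary matrix integration is needed. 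Writing $\phi=\phi'+i\phi''$ with $\phi',\phi''\in\mathbb R^N$, the real split natural for symmetric $H$, the bosonic dependence of the integrand is through $|\phi'|^2$, $|\phi''|^2$ and $\phi'\!\cdot\!\phi''$ alone.

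I would then integrate out the supervectors. Reducing the $(\phi',\phi'')$-integral to an integral over the $O(N)$-invariants and renaming them yields the two bosonic radial variables $s,t\in(0,\infty)$ and the angular variable $\alpha\in(-1,1)$, with Jacobian proportional to $(st)^{N-1}(1-\alpha^2)^{-1/2}$; expanding the exponential in the anticommuting variables and carrying out the Berezin integration produces the single fermionic variable $z$ on a closed contour and, as the surviving coefficient, the polynomial preexponential factor $\tfrac1N+(z-iE_\epsilon)^2-(2-r)(z-iE_\epsilon)(s+t)+4(1-r)st(1-\alpha^2)$. Under the same substitution the two quartic terms become exactly $N[(1-\tfrac r2)(s^2+t^2)+rst]$ and $2N(1-r)st\alpha^2$; assembling these with the $H$-independent factors $-N(iE_\epsilon(s+t)-\tfrac12\ln s-\tfrac12\ln t)$, $-N(iE_\epsilon z-\tfrac12z^2+\ln z)$, $-N(-\tfrac12\ln(1-\alpha^2))$ and the Jacobians --- including the $\tfrac{s+t}{st}$ from $\sum_m\bar\phi_m\phi_m$ against the two factors $(st)^{N-1}$, and the $(1-\alpha^2)^{-3/2}$ from the orbit Jacobian times the $\alpha$-content of the Berezin coefficient --- reproduces the asserted integrand. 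The prefactor $\tfrac{(-1)^N2^NN^2}{8\pi^2}$ is then fixed by the identity $i=\langle i\rangle$, and setting $r=0$ recovers Lemma~\ref{l:GOE}.

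The step I expect to be the main obstacle is this last one in the real (GOE-type) setting: extracting the $(1-\alpha^2)^{-3/2}$ weight together with all the combinatorial constants from the reduction over $O(N)$-orbits in the presence of the nilpotent cross-terms, and in particular retaining the $\tfrac1N$ term in the bracket, which arises from one specific Wick contraction and is easy to lose. The rest is routine and parallels the proof of Lemma~\ref{l:GOE}: the $s,t$-integrals converge because the real part of the exponent tends to $+\infty$ (the $s^2+t^2$ term dominates and $\epsilon>0$ contributes $+\epsilon(s+t)$), the $\alpha$-integral has only the integrable singularity at $\alpha=\pm1$, and the $z$-integral is taken over a closed contour chosen to respect the saddle structure of $-\tfrac12z^2+\ln z+iE_\epsilon z$; the parameter $r$ enters only through bounded changes of the exponent, so none of these estimates is affected. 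The limit $\epsilon\to+0$ required by (\ref{eq:rhoG}) is then taken as in the GUE and GOE cases.
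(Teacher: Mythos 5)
The paper omits the proof of Lemma~\ref{l:interp}, stating only that it is ``almost identical'' to the proof of Lemma~\ref{l:GOE}; your plan---rerun the GOE supersymmetric derivation, tracking the covariance (\ref{eq:cov'}) throughout---is exactly that, so the overall route matches the paper's intended one. Your accounting is also essentially correct: splitting $\langle H_{ij}H_{kl}\rangle M_{ij}M_{kl}$ into the $\sum M_{ij}M_{ji}$ piece (weight $1$) and the $\sum M_{ij}M_{ij}$ piece (weight $1-r$), passing to real/imaginary parts $\z=\x+i\y$, rotating $(\x,\y)$ to a two-dimensional coordinate frame, Berezin-integrating the two leading Grassmann components, then reducing the bosonic integral to radial variables $s,t$ and the angular variable $\alpha$, gives the claimed exponent $N[(1-\tfrac r2)(s^2+t^2)+rst]$ from the bosonic quartics and $2N(1-r)st\alpha^2$ from $[\x,\y]^2$, with the bracketed prefactor $\tfrac1N+(z-iE_\epsilon)^2-(2-r)(z-iE_\epsilon)(s+t)+4(1-r)st(1-\alpha^2)$ emerging from the two-by-two Berezin determinant.

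One wrinkle you do not flag, and which distinguishes $0<r<1$ from the GOE case $r=0$: after $\z=\x+i\y$, the fermion--boson cross-term equals $2\sum\bigl[(2-r)(x_jx_k+y_jy_k)+ir(y_jx_k-x_jy_k)\bigr]\bar\psi_j\psi_k$, so in the rotated frame the coefficients of $\bar\phi_1\phi_2$ and $\bar\phi_2\phi_1$ are \emph{not} equal---there is an extra skew term $\propto ir\,|\x||\y|\sqrt{u(1-u)}\,(\bar\phi_1\phi_2-\bar\phi_2\phi_1)$ absent for GOE. The two-by-two Berezin determinant is still elementary, but its off-diagonal contribution is a product of two \emph{different} factors (of the form $(a+ib)(a-ib)=a^2+b^2$), and it is precisely this $a^2+b^2$ structure that produces $(2-r)^2-r^2=4(1-r)$ and hence the $4(1-r)st(1-\alpha^2)$ term. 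So ``carry $r$ through unchanged'' is a slight understatement: the algebraic shape of the Berezin determinant changes, and one must check it still closes on the $O(N)$-invariants (it does). Your stated Jacobian ``$(st)^{N-1}(1-\alpha^2)^{-1/2}$'' is also off---in the un-rescaled variables it is $(st)^{N-1}(1-\alpha^2)^{(N-3)/2}$, which is what supplies both the $(1-\alpha^2)^{-3/2}$ prefactor and the $\tfrac N2\ln(1-\alpha^2)$ in the exponent after rescaling---but since you propose to fix the constants a posteriori via $i=\langle i\rangle$, this is a cosmetic rather than structural issue.
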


In the three lemmata, the contour integral is along a counterclockwise contour about zero; the choice of the
branch of the logarithm is not important, since it is multiplied by an integer number $N$ in the exponent.
We prove the lemmata in Section~\ref{s:susy}, and then apply saddle point analysis to derive the theorems in
Section~\ref{s:sad}. We omit the proofs of Lemma~\ref{l:interp} and Theorem~\ref{thm:interp}
which are almost identical to the proofs of Lemma~\ref{l:GOE} and Theorem~\ref{thm:goe}, respectively.

\vspace{3mm}\noindent
{\bf Acknowledgement} I am grateful to Tom Spencer for encouraging me to work on this problem, and
for his helpful advise. I also thank  Margherita Disertori, Tanya Shcher\-bina and Sasha Sodin for  useful discussions, and Yan Fyodorov for suggesting to consider the interpolating ensembles 
of Mehta--Pandey, and for his interest in this work.

\section{Integral representation}\label{s:susy}

The proof of Lemmata~\ref{l:GUE},\ref{l:GOE}, and \ref{l:interp} is based on the supersymmetric formalism. Let us introduce the notation.

For
$\z, \z' \in \mathbb{C}^N$, set
\[ [\z, \z'] = \sum_{j=1}^N \bar{z}_j z'_j~, \quad |\z| = \sqrt{[\z, \z]}~;\]
in particular, for an $N \times N$ matrix $A$,
\[ [\z, A\z] = \sum \bar{z}_j A_{jk} z_k~. \]
Also set
\begin{equation}\label{eq:dz}
D\z = \prod_{j=1}^N \frac{d\Re z_j \, d\Im z_j}{\pi}~.
\end{equation}
Let $\psi_1, \cdots, \psi_N, \bar{\psi}_1, \cdots, \bar{\psi}_N$ be anti-commuting variables, i.e.\
\[ \psi_i \psi_j + \psi_j \psi_i = \psi_i \bar\psi_j + \bar\psi_j \psi_i = \bar\psi_i \bar\psi_j + \bar\psi_j \bar\psi_i = 0~,\]
and let
\[ [\bpsi, \bpsi] = \sum \bar\psi_j \psi_j~, \quad [ \bpsi, A\bpsi] = \sum \bar\psi_j A_{jk} \psi_k~. \]
Also let
\[ D\bpsi = \prod_{j=1}^N d\bar\psi_j d\psi_j~.\]
The supersymmetric (Berezin) integral is defined by the rules
\[ \int \psi_j d\psi_j = \int \bar \psi_j d\bar\psi_j = 1~, \quad \int d\psi_j = \int d\bar\psi_j = 0~.\]

We start from the identities
\begin{equation}\label{eq:ident}
\begin{split}
(H - \E)^{-1}(k, j) &= i \int D\z \int D\bpsi  \, e^{-i[\z, (\E-H)\z] - i[\bpsi, (\E - H)\bpsi]} z_k \bar{z}_j~,\\
i                   &= i \int D\z \int D\bpsi  \, e^{-i[\z, (\E-H)\z] - i[\bpsi, (\E - H)\bpsi]}~,
\end{split}
\end{equation}
which are valid for any Hermitian matrix $H$ and any $\epsilon > 0$ (see Spencer \cite[(4.20)]{Sp}). We shall prove the
first part of each of the lemmata using the first identity; the second part is similarly derived from the second identity.

\begin{proof}[Proof of Lemma~\ref{l:GUE}]
Taking the expectation of (\ref{eq:ident}) and summing over $k = j$, we obtain:
\begin{equation}\label{eq:gue2} \frac{1}{N}\langle \tr \, G(E_\epsilon) \rangle
  = \frac{i}{N} \int D\z \int D\bpsi \, [\z, \z]\, e^{-i\E([\z,\z] + [\bpsi,\bpsi])}
  \left\langle e^{i ([\z, H\z] + [\bpsi, H\bpsi])} \right\rangle~.
\end{equation}
Using the  identity
\[ \langle e^g \rangle = e^{\langle g^2 \rangle/2} \]
(valid for any Gaussian random variable $g$), we deduce
\begin{equation}\label{eq:gue3}\frac{1}{N}\langle \tr \, G(E_\epsilon) \rangle
= \frac{i}{N} \int D\z \int D\bpsi \, [\z, \z]\, e^{-i\E([\z,\z] + [\bpsi,\bpsi]}
   e^{-\frac{1}{2} \langle ([\z, H\z] + [\bpsi, H\bpsi])^2 \rangle}~.
\end{equation}
From (\ref{eq:cov}),
\[ \langle ([\z, H\z] + [\bpsi, H\bpsi])^2 \rangle
  = \frac{1}{N} \left( [\z, \z]^2 - [\bpsi,\bpsi]^2 + 2 \sum_{j,k=1}^N {z}_j \bar\psi_j \bar{z}_k \psi_k\right)~.\]
Let $U$ be a unitary matrix such that $U_{1j} = \bar{z}_j / |\z|$. Denote
\[ \bar\phi_i = \sum_j \bar{U}_{ij} \bar{\psi}_j~, \quad \phi_i = \sum_j U_{ij} \psi_j~. \]
Then
\[ [\bphi, \bphi] = [\bpsi, \bpsi]~, \quad
\sum z_j \bar\psi_j = |\z|\bar\phi_1~, \quad \sum z_j \psi_j = |\z| \phi_1~, \quad D\bphi = D\bpsi~,\]
in particular,
\[ \sum_{j,k=1}^N z_j \bar\psi_j \bar{z}_k \psi_k = [\z,\z] \bar\phi_1\phi_1~. \]
Returning to (\ref{eq:gue3}), we obtain:
\begin{multline*} \frac{1}{N}\langle \tr \, G(E_\epsilon) \rangle
=  \frac{i}{N} \int D\z \int D\tphi d\bar\phi_1d\phi_1 \, [\z, \z] \, e^{-i\E([\z,\z] + [\tphi,\tphi])}
  e^{-i\E \bar\phi_1 \phi_1} \\
e^{-\frac{1}{2N} \left( [\z,\z]^2 - [\tphi, \tphi]^2 \right)}
e^{-\frac{1}{2N} \left( 2[\z,\z] \bar\phi_1 \phi_1 - 2 \bar\phi_1 \phi_1 [\tphi,\tphi]\right)}~,
\end{multline*}
where we have set $\tphi$ to be the Grassmanian vector $\bphi$ without the first coordinate. Integrating over $\phi_1$
(and $\bar\phi_1$), we obtain:
\[ \int d\bar\phi_1 d\phi_1 e^{-i\E \bar\phi_1 \phi_1 - \frac{1}{N} [\z,\z] \bar\phi_1 \phi_1
  - \frac{1}{N} \bar\phi_1 \phi_1 [\tphi,\tphi]} = i\E + \frac{1}{N} [\z,\z] - \frac{1}{N} [\tphi,\tphi]~.
  \]
To integrate over $\tphi$, we use a Grassmanian version of polar coordinates:
\begin{equation}\label{eq:phipolar}
\int D\tphi F([\tphi, \tphi]) = (-1)^{N-1} \frac{(N-1)!}{2\pi i} \oint \frac{F(z)}{z^N} dz
\end{equation}
(the contour encircles the origin counterclockwise).  For the sake of completeness, let us prove this formula, after stating it as
\begin{lemma}
For any analytic function $F$ and an $N$-component Grassmann vector $\bphi$,
\[\int D\bphi F([\bphi, \bphi]) = (-1)^{N} \frac{N!}{2\pi i} \oint \frac{F(z)}{z^{N+1}} dz~, \]
where the contour encircles the origin counterclockwise.
\end{lemma}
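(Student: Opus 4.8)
The plan is to reduce to the case of monomials by a power-series expansion and then evaluate the Berezin integral of a top-degree monomial directly. Since $[\bphi,\bphi] = \sum_{j=1}^N \bar\phi_j\phi_j$ is a sum of pairwise-commuting nilpotents (each $\bar\phi_j\phi_j$ squares to zero), one has $[\bphi,\bphi]^{N+1} = 0$, so $F([\bphi,\bphi]) = \sum_{k=0}^N \frac{F^{(k)}(0)}{k!}\,[\bphi,\bphi]^k$ is in fact a polynomial, and only the first $N+1$ Taylor coefficients of $F$ enter. Both sides of the claimed identity are linear in $F$; indeed the right-hand side equals $(-1)^N F^{(N)}(0)$ by Cauchy's formula. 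Hence it suffices to verify the identity for $F(z) = z^k$, $k \ge 0$.

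For $F(z) = z^k$ the right-hand side is $(-1)^N N!$ when $k=N$ and $0$ otherwise, by the residue theorem. On the left, expand $[\bphi,\bphi]^k$ by the multinomial theorem: each resulting monomial uses only $2k$ of the $2N$ variables $\psi_1,\bar\psi_1,\dots,\psi_N,\bar\psi_N$, and by the definition of the Berezin integral such a monomial integrates to zero unless $k=N$ (for $k>N$ the power is already zero). When $k=N$, every monomial that repeats an index contains a factor $(\bar\phi_j\phi_j)^2 = 0$, so only the $N!$ monomials with all indices distinct survive, giving $[\bphi,\bphi]^N = N!\,\bar\phi_1\phi_1\cdots\bar\phi_N\phi_N$.

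It then remains to compute $\int D\bphi\,\bar\phi_1\phi_1\cdots\bar\phi_N\phi_N$. Each block $\bar\phi_j\phi_j$ is a product of two anticommuting variables, hence commutes with everything, so the integral factorises over $j$ into $N$ copies of $\int d\bar\phi_j\, d\phi_j\,\bar\phi_j\phi_j$. A short manipulation using $\int\psi\,d\psi = \int\bar\psi\,d\bar\psi = 1$ (equivalently $\int d\psi\,\psi = \int d\bar\psi\,\bar\psi = -1$) together with the anticommutation rules gives $\int d\bar\phi_j\, d\phi_j\,\bar\phi_j\phi_j = -1$, whence $\int D\bphi\,\bar\phi_1\phi_1\cdots\bar\phi_N\phi_N = (-1)^N$ and $\int D\bphi\,[\bphi,\bphi]^N = (-1)^N N!$. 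This matches the right-hand side for every monomial, and the general case follows by linearity. Finally, applying the lemma with $\bphi$ replaced by the $(N-1)$-component vector $\tphi$ recovers formula (\ref{eq:phipolar}).

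The reduction to monomials and the residue computation are routine; the one place that requires care is the sign bookkeeping in the single-pair Berezin integral $\int d\bar\phi_j\, d\phi_j\,\bar\phi_j\phi_j = -1$, since an error there would flip the overall factor $(-1)^N$.
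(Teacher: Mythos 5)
Your proof is correct and follows essentially the same route as the paper's: expand $F$ in a Taylor series (which terminates because $[\bphi,\bphi]^{N+1}=0$), observe that only the top power contributes, compute $\int D\bphi\,[\bphi,\bphi]^N=(-1)^N N!$ by opening the brackets, and match the right-hand side via Cauchy's formula. Your explicit sign check $\int d\bar\phi_j\,d\phi_j\,\bar\phi_j\phi_j=-1$ is consistent with the paper's convention $\int\psi\,d\psi=1$ and $D\bpsi=\prod d\bar\psi_j\,d\psi_j$.
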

\begin{proof}
First,
\[ F([\bphi, \bphi])  = \sum_{j=0}^N \frac{F^{(j)}(0)}{j!} [\bphi, \bphi]^j~.  \]
Only the last term contributes to the integral, thus
\[ \int D\bphi \, F([\bphi, \bphi])
=  \frac{F^{(N)}(0)}{N!} \int D\bphi  [\bphi, \bphi]^N = (-1)^N F^{(N)}(0)~, \]
where on the second step we opened all the brackets and applied the integration rules. Now the statement
follows from Cauchy's formula.
\end{proof}
Applying (\ref{eq:phipolar}), we obtain:
\begin{multline}\label{eq:gue6}
\frac{1}{N}\langle \tr \, G(E_\epsilon) \rangle
= \frac{i}{N} (-1)^{N-1} \frac{(N-1)!}{2\pi i} \int D\z \oint \frac{dz}{z^N} \,[\z, \z]\,\\
\left( i\E + \frac{1}{N} [\z,\z] - \frac{z}{N} \right)
\exp\left\{-i\E [\z,\z] - \frac{1}{2N} [\z,\z]^2 - i\E z + \frac{1}{2N} z^2\right\}~.
\end{multline}
Now we pass to polar coordinates in $\z$ using the formula
\begin{equation}\label{eq:zpolar} \int_{\mathbb{C}^N}  D\z \, F([\z,\z]) = \frac{1}{(N-1)!} \int_0^\infty F(s) s^{N-1} ds~,
\end{equation}
and obtain:
\begin{multline}\label{eq:gue7}
\frac{1}{N}\langle \tr \, G(E_\epsilon) \rangle
= \frac{(-1)^{N-1}}{2\pi N} \int_0^\infty ds \oint dz (i\E + \frac{s-z}{N}) \\
\exp\left\{-i\E s - \frac{1}{2N} s^2\right\} \exp\left\{-i\E z + \frac{1}{2N} z^2\right\} \frac{s^N}{z^N}~.
\end{multline}
The change of variables $s \rightarrow Ns$, $z \rightarrow Nz$ concludes the proof of Lemma~\ref{l:GUE}.
\end{proof}

\begin{proof}[Proof of Lemma~\ref{l:GOE}]
Similarly to the proof of Lemma~\ref{l:GUE},
\begin{equation}\label{eq:goe3}\frac{1}{N}\langle \tr \, G(E_\epsilon) \rangle
= \frac{i}{N} \int D\z \int D\bpsi \, [\z, \z]\, e^{-i\E([\z,\z] + [\bpsi,\bpsi])}
   e^{-\frac{1}{2} \langle ([\z, H\z] + [\bpsi, H\bpsi])^2 \rangle}~.
\end{equation}
Now we have:
\[\begin{split} &\langle ([\z, H\z] + [\bpsi, H\bpsi])^2 \rangle\\
   &\quad= \frac{1}{N} \left\{ |\sum_j z_j^2|^2 + [\z, \z]^2 - [\bpsi,\bpsi]^2 + 2 \sum_j \bar{z}_j \bar\psi_j \sum_k z_k \psi_k
 + 2 \sum_j {z}_j \bar\psi_j \sum_k \bar{z}_k \psi_k\right\} \\
   &\quad= \frac{1}{N} \left\{ 2[\x, \x]^2 + 2[\y,\y]^2 + 4 [\x,\y]^2 + 4 \left[ \sum_j x_j \bar\psi_j \sum_k x_k \psi_k
+ \sum_j y_j \bar{\psi}_j \sum_k y_k \psi_k\right] \right\}~,
\end{split}\]
where we use the  decomposition $\z = \x + i\y$ of $\z$ into its real and imaginary parts.

Denote $u = \frac{1}{2} (1 + \frac{[\x,\y]}{|\x||\y|})$; then
\[ u(1-u) = \frac{1}{|\x|^2|\y|^2}(|\x|^2|\y|^2 - [\x,\y]^2)~. \]
Define
\[ \xx = |x| (\sqrt{u} \e_1 + \sqrt{1-u} \e_2)~, \quad \yy = |\y| (\sqrt{u} \e_1 - \sqrt{1-u}\e_2)~,
\]
where $\e_j$ is the $j$-th vector of the standard basis. Then one can find an orthogonal map
which takes $\x,\y$ to $\xx, \yy$. Hence, similarly to the proof of Lemma~\ref{l:GUE},  one can pass from
$\bpsi$ to a new Grassmann variable $\bphi$ and rewrite (\ref{eq:goe3}) as
\begin{multline*}
\frac{1}{N}\langle \tr \, G(E_\epsilon) \rangle
= \frac{i}{N} \iint D\x D\y  \int D\bphi (|\x|^2 + |\y|^2) \\
\exp\Big\{-i\E(|\x|^2 + |\y|^2 + [\bphi,\bphi])
-\frac{1}{2N} \Big[ 2 |\x|^4 + 2 |\y|^4 + 4 [\x,\y]^2 - [\bphi,\bphi]^2 \\+ 4 (|\x|^2+|\y|^2)(u \bar\phi_1\phi_1 + (1-u)\bar\phi_2\phi_2) +
4 \sqrt{u(1-u)}(|\x|^2-|\y|^2)(\bar\phi_2\phi_1 + \bar\phi_1\phi_2)\Big]\Big\}~,
\end{multline*}
where $D\x D\y$ still incorporates the factor $\pi^{-N}$ from (\ref{eq:dz}).
Let $\tphi$ be the Grassmannian vector $\bphi$ whithout the first two coordinates. Then the above expression can be rewritten as
\begin{multline*}
\frac{1}{N}\langle \tr \, G(E_\epsilon) \rangle
= \frac{i}{N} \iint D\x D\y  \int D\tphi \int d\bar\phi_1d\phi_1d\bar\phi_2d\phi_2 (|\x|^2 + |\y|^2) \\
\exp\Big\{-i\E(|\x|^2 + |\y|^2 + [\tphi,\tphi])
-\frac{1}{N} \Big[ |\x|^4 + |\y|^4 + 2 [\x,\y]^2 - \frac{1}{2} [\tphi,\tphi]^2 \Big]\\
- i\E (\bar\phi_1\phi_1 + \bar\phi_2\phi_2) + \frac{1}{N}\Big[\bar\phi_1\phi_1\bar\phi_2\phi_2
    + [\tphi,\tphi](\bar\phi_1\phi_1 + \bar\phi_2\phi_2) \\
- 2 (|\x|^2+|\y|^2)(u \bar\phi_1\phi_1 + (1-u)\bar\phi_2\phi_2)
  -2  \sqrt{u(1-u)}(|\x|^2-|\y|^2)(\bar\phi_2\phi_1 + \bar\phi_1\phi_2)\Big]\Big\}~.
\end{multline*}

Let us first integrate over $\phi_1$, $\bar\phi_1$, $\phi_2$, and $\bar\phi_2$. Using the Taylor expansion
$e^z = 1 + z + z^2/2+ \cdots$ (the higher-order terms vanish due to anticommutativity), we obtain:
\begin{equation}
\begin{split}
&\int d\bar\phi_1d\phi_1d\bar\phi_2d\phi_2 \exp\Big\{ \frac{1}{N} \bar\phi_1\phi_1\bar\phi_2\phi_2
 + (\frac{1}{N} [\tphi,\tphi]-i\E) (\bar\phi_1\phi_1 + \bar\phi_2\phi_2) \\
&\qquad\qquad - \frac{2}{N} (|\x|^2+|\y|^2)(u \bar\phi_1\phi_1 + (1-u)\bar\phi_2\phi_2)  \\
&\qquad\qquad- \frac{2}{N}  \sqrt{u(1-u)}(|\x|^2-|\y|^2)(\bar\phi_2\phi_1 + \bar\phi_1\phi_2)\Big\} \\
&\qquad= \frac{1}{N} + (\frac{1}{N} [\tphi,\tphi]-i\E)^2 - \frac{2}{N}(\frac{1}{N} [\tphi,\tphi]-i\E)(|\x|^2+|\y|^2)\\
&\qquad\qquad+ \frac{4}{N^2}(|\x|^2|\y|^2 - [\x,\y]^2)~.
\end{split}
\end{equation}
Thus
\begin{equation}
 \begin{split}
&\frac{1}{N}\langle \tr \, G(E_\epsilon) \rangle
= \frac{i}{N} \iint D\x D\y  \int D\tphi\,\, (|\x|^2 + |\y|^2) \\
&\qquad\qquad \Big( \frac{1}{N} + (\frac{1}{N} [\tphi,\tphi]-i\E)^2 - \frac{2}{N}(\frac{1}{N} [\tphi,\tphi]-i\E)(|\x|^2+|\y|^2)\\
&\qquad\qquad\qquad+ \frac{4}{N^2}(|\x|^2|\y|^2 - [\x,\y]^2) \Big)\\
&\qquad\qquad\exp\Big\{-i\E([\x,\x] + [\y,\y] + [\tphi,\tphi]) \\
&\qquad\qquad\qquad-\frac{1}{N} \Big[ |\x|^4 + |\y|^4 + 2 [\x,\y]^2 - \frac{1}{2} [\tphi,\tphi]^2 \Big]\Big\}~.
 \end{split}
\end{equation}
Now we integrate over $\tphi$ using the formula (\ref{eq:phipolar}). We obtain:
\begin{equation}
 \begin{split}
&\frac{1}{N} \langle \tr \, G(E_\epsilon) \rangle
= \frac{i}{N} \frac{(-1)^N (N-2)!}{2\pi i} \iint D\x D\y  \oint \frac{dz}{z^{N-1}} \, (|\x|^2 + |\y|^2) \\
&\qquad\qquad \Big( \frac{1}{N} + (\frac{z}{N}-i\E)^2 - \frac{2}{N}(\frac{z}{N}-i\E)(|\x|^2+|\y|^2)\\
&\qquad\qquad\qquad+ \frac{4}{N^2}(|\x|^2|\y|^2 - [\x,\y]^2) \Big)\\
&\qquad\qquad\exp\Big\{-i\E([\x,\x] + [\y,\y] + z) \\
&\qquad\qquad\qquad-\frac{1}{N} \Big[ |\x|^4 + |\y|^4 + 2 [\x,\y]^2 - \frac{1}{2} z^2 \Big]\Big\}~.
 \end{split}
\end{equation}
Finally, we pass to polar coordinates in $\x$ and $\y$. Setting $x = s\alpha$, $y = t\beta$, where $s=|\x|$,
$y=|\y|$, and $\alpha,\beta \in S^{N-1}$ and using the formul{\ae}
\[
\int_{\mathbb{R}^N} F(\x) D\x = \frac{N\pi^{N/2}}{\Gamma(\frac{N}{2} +1)} \int_0^\infty ds \int_{S^{N-1}} d\sigma(\alpha)
  F(s\alpha)
\]
and
\[
\iint_{S^{N-1} \times S^{N-1}} F([\alpha,\beta]) d\sigma(\alpha) d\sigma(\beta)
= \frac{1}{B(\frac{N-1}{2},\frac{1}{2})} \int_{-1}^1 F(\alpha_1) (1-\alpha_1)^{\frac{N-3}{2}} d\alpha_1~,
\]
where $\sigma$ is the invariant probability measure on $S^{N-1}$, and
\[ B(a,b) = \int_0^1 t^{a-1} (1-t)^{b-1} dt = \frac{\Gamma(a)\Gamma(b)}{\Gamma(a+b)}\]
is the Euler beta function, we obtain:
\begin{equation}
 \begin{split}
&\frac{1}{N} \langle \tr \, G(E_\epsilon) \rangle
= \frac{i}{N \pi^N} \frac{(-1)^N (N-2)!}{2\pi i} \frac{\pi^N N^2}{\Gamma(N/2+1)^2 B(\frac{N-1}{2},\frac{1}{2})} \\
&\qquad \int_0^\infty ds \, s^{N-1} \int_0^\infty dt \, t^{N-1} \int_{-1}^1 d\alpha \, (1-\alpha^2)^{\frac{N-3}{2}}\oint \frac{dz}{z^{N-1}} \\
&\qquad\qquad (s^2 + t^2) \Big( \frac{1}{N} + (\frac{z}{N}-i\E)^2 - \frac{2}{N}(\frac{z}{N}-i\E)(s^2 + t^2)\\
&\qquad\qquad\qquad+ \frac{4}{N^2}s^2 t^2 (1 - \alpha^2) \Big)\\
&\qquad\qquad\exp\Big\{-i\E(s^2 + t^2  + z) -\frac{1}{N} \Big[ s^4 + t^4 + 2 s^2t^2\alpha^2 - \frac{1}{2} z^2 \Big]\Big\}~.
 \end{split}
\end{equation}
The final change of variables $s \gets s^2/N$, $t \gets t^2/N$, $z \gets z/N$ concludes the proof.
\end{proof}

\section{Saddle point analysis}\label{s:sad}

\begin{proof}[Proof of Theorem~\ref{thm:gue}]
We shall take $\epsilon \to +0$ in Lemma~\ref{l:GUE} and compute the asymptotics using saddle-point analysis.
First we calculate the saddles. Set
\[ f(s) = iEs + \frac{1}{2} s^2 - \ln s~.\]
Then
\[ f'(s) = iE + s - \frac{1}{s}~,\]
therefore the saddles are
\begin{equation}\label{eq:spm} s_\pm = - \frac{iE}{2} \pm \sqrt{1-\frac{E^2}{4}}~.\end{equation}
Similarly, for
\[ g(z) = iE z - \frac{1}{2} z^2 + \ln z \]
we have
\[ g'(z) = iE - z + \frac{1}{z}~, \]
so the saddles are
\begin{equation}\label{eq:zpm} z_\pm = \frac{iE}{2} \pm \sqrt{1-\frac{E^2}{4}}~. \end{equation}
We deform the contours in $s$ and $z$ as follows:
\begin{equation}\label{eq:gamma1} \Gamma_1: \quad s = \begin{cases}
s_+ \tilde{s}~, &0 \leq \tilde{s} \leq A \\
s_+ A + \tilde{s} - A~, &A \leq \tilde{s}~,
       \end{cases}\end{equation}
and
\begin{equation}\label{eq:gamma2} \Gamma_2 : \quad z = e^{i\theta}~, \quad 0 \leq \theta \leq 2\pi~, \end{equation}
where
\[ A = \begin{cases}
2~, &|E| \leq \sqrt{3}\\
\frac{1}{\frac{E^2}{2}-1}~, &\sqrt{3} < |E| < 2
       \end{cases}~. \]
The contour in $s$ passes through the saddle point $s_+$, whereas the contour in $z$ passes
through both $z_+$ and $z_-$. The change of coutour is justified according to Cauchy's theorem.
\begin{cl}\label{cl-f}
The minimum of $\Re f(s)$ on the $s$-contour is achieved at $s = s_+$, i.e.\ $\tilde{s}=1$.
\end{cl}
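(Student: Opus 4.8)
The plan is to reduce Claim~\ref{cl-f} to a one-variable minimization and then settle it by elementary calculus, splitting according to the size of $|E|$.

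First I would exploit that $|s_+|^2=\frac{E^2}{4}+(1-\frac{E^2}{4})=1$, so $s_+=e^{i\phi}$ with $\cos\phi=\sqrt{1-E^2/4}>0$ and $\sin\phi=-E/2$. On the radial part $s=s_+\tilde s$ ($0\le\tilde s\le A$) of $\Gamma_1$, the elementary identities $\Re(iE s_+)=\frac{E^2}{2}$, $\Re(s_+^2)=1-\frac{E^2}{2}$, and $\Re\ln(s_+\tilde s)=\ln\tilde s$ give
\[ \Re f(s_+\tilde s)=\tfrac{E^2}{2}\,\tilde s+\tfrac12\bigl(1-\tfrac{E^2}{2}\bigr)\tilde s^2-\ln\tilde s=:h(\tilde s)~. \]
On the horizontal part $s=s_+A+(\tilde s-A)$ I would write $s=v+i\beta$ with $\beta=-AE/2$ and $v\ge\Re(s_+A)>0$, compute $\frac{d}{dv}\Re f(s)=v\bigl(1-(v^2+\beta^2)^{-1}\bigr)$, and note this is strictly positive since $v^2+\beta^2\ge|s_+A|^2=A^2\ge4>1$ along the whole ray; hence $\Re f$ increases on the horizontal part and its minimum there is $h(A)$. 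So it remains to prove $\min_{0\le\tilde s\le A}h(\tilde s)=h(1)$.

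Next I would study $h$ directly. From $h'(\tilde s)=\frac{E^2}{2}+(1-\frac{E^2}{2})\tilde s-\tilde s^{-1}$ one reads off $h'(1)=0$, and $h\to+\infty$ as $\tilde s\to0^+$; moreover $h''(\tilde s)=(1-\tfrac{E^2}{2})+\tilde s^{-2}$. If $E^2\le2$ then $h''>0$ on $(0,\infty)$, so $h$ is strictly convex and $\tilde s=1$ is its unique minimizer on $[0,A]$; this settles $|E|\le\sqrt2$. If $E^2>2$, then $h$ is convex on $(0,\tilde s^\ast]$ and concave on $[\tilde s^\ast,\infty)$ with $\tilde s^\ast=(\tfrac{E^2}{2}-1)^{-1/2}>1$; on the convex part $h'(1)=0$ makes $\tilde s=1$ the minimum and $h$ nondecreasing on $[1,\tilde s^\ast]$, while on the concave part the minimum is attained at an endpoint, so it suffices to compare the right endpoint value with $h(1)$.

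For $|E|\le\sqrt3$ the right endpoint is $A=2$, where $h(2)=2-\ln2$ (the $E$-dependence cancels), while $h(1)=\frac{E^2}{4}+\frac12\le\frac54<2-\ln2$. For $\sqrt3<|E|<2$ the right endpoint is $A=(\tfrac{E^2}{2}-1)^{-1}$; using $A^2(1-\tfrac{E^2}{2})=-A$ one gets $h(A)=1+\frac A2-\ln A$ and $h(1)=1+\frac1{2A}$, so $h(A)-h(1)=\frac{A^2-1}{2A}-\ln A$, which is $\ge0$ for $A\ge1$ since it vanishes at $A=1$ and has derivative $\frac{(A-1)^2}{2A^2}\ge0$; one also checks $1<\tilde s^\ast<A<2$ in this range, so the endpoint analysis applies. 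Combining the three regimes gives $\min_{\Gamma_1}\Re f=\Re f(s_+)=h(1)$. The step I expect to be the main obstacle is the non-convex regime $|E|>\sqrt2$: there the assertion is not automatic and genuinely uses the specific choice of the cutoff $A$ — in particular, $A=(\tfrac{E^2}{2}-1)^{-1}$ for $|E|>\sqrt3$ is precisely what reduces $h(A)\ge h(1)$ to the elementary inequality $\frac{A^2-1}{2A}\ge\ln A$; everything else is routine.
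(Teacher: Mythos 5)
Your proof is correct and reaches the same conclusion, but by a genuinely different route than the paper. The paper handles the radial part of $\Gamma_1$ by factoring the derivative
\[ \frac{d}{d\tilde s}\,\Re f(s_+\tilde s) = \tilde s^{-1}\left(\tilde s^2\bigl(1-\tfrac{E^2}{2}\bigr)+\tilde s\,\tfrac{E^2}{2}-1\right), \]
reading off the two roots $\tilde s=1$ and $\tilde s=(E^2/2-1)^{-1}$, and observing that for $|E|>\sqrt2$ the second root is $\ge A$ under the stated choice of $A$; hence $\Re f(s_+\tilde s)$ decreases on $(0,1]$ and increases on $[1,A]$, and no endpoint comparison is needed at all. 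You instead study the restriction $h(\tilde s)=\Re f(s_+\tilde s)$ through its second derivative: convexity of $h$ on $(0,\tilde s^\ast]$ combined with $h'(1)=0$ yields the minimum at $\tilde s=1$ there, and you handle the possible concave tail $[\tilde s^\ast,A]$ by comparing endpoint values, which reduces to the elementary inequality $\tfrac{A^2-1}{2A}\ge\ln A$ for $A\ge1$ (respectively $2-\ln 2>\tfrac54$ when $A=2$). Both routes are sound; the paper's is shorter because the explicit factorization shows at a glance that the cutoff $A$ sits at or before the second turning point of $h'$, making the endpoint comparison moot, while your version makes more transparent why the specific choice of $A$ is the right one. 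One small correction: on the horizontal leg you assert $v^2+\beta^2\ge A^2\ge 4$, but for $\sqrt3<|E|<2$ one has $A=(\tfrac{E^2}{2}-1)^{-1}\in(1,2)$, so $A^2\ge4$ fails there; what your argument actually needs, and what is true, is only $A>1$ for $|E|<2$, giving $v^2+\beta^2\ge A^2>1$ along the ray --- which is exactly the form of the bound the paper uses.
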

\begin{proof}[Proof of Claim~\ref{cl-f}]
We have:
\[ \Re f(x+iy) = - E y + \frac{1}{2}(x^2 - y^2) - \frac{1}{2}\ln(x^2 + y^2)~.\]
Therefore
\[\begin{split}
\frac{\partial}{\partial x} \Re f(x+iy) &= x \left\{ 1 - \frac{1}{x^2 + y^2} \right\}~,\\
\frac{\partial}{\partial y} \Re f(x+iy) &= - E - y \left\{ 1 + \frac{1}{x^2 + y^2} \right\}~.
\end{split}\]
For $x^2 + y^2 \geq 1$, the $x$-derivative is positive, therefore $\Re f(s(\tilde{s}))$
is increasing for $\tilde{s} \geq A$. Next,
\[\begin{split} \frac{d}{d\tilde{s}} \Re f(s_+ \tilde{s})
&= \tilde{s} (\Re s_+)^2 (1 - \tilde{s}^{-2}) - \Im s_+ (E +  \tilde{s} \Im s_+ (1 + \tilde{s}^{-2}))\\
&= \tilde{s} (1 - E^2/4) (1 - \tilde{s}^{-2}) + \frac{E}{2} (E - \tilde{s} \frac{E}{2} (1 + \tilde{s}^{-2})) \\
&= \tilde{s}^{-1} \left\{ \tilde{s}^2 (1 - \frac{E^2}{2}) + \tilde{s} \frac{E^2}{2} - 1 \right\}~.
\end{split}\]
The quadtatic  expression in the brackets has two roots, $1$ and $(E^2/2-1)^{-1}$; for $|E| < \sqrt{2}$ the second root
is negative, whereas for $|E| > \sqrt{2}$ it  is greater than $A$. Therefore $\Re f(s(\tilde{s}))$ is decreasing for
$0 \leq \tilde{s} \leq 1$ and increasing for $1 \leq \tilde{s} \leq A$.
\end{proof}

\begin{cl}\label{cl-g}
The minimum of $\Re g(z)$ on the $z$-contour is achieved at $z = z_\pm$, i.e.\ for the two values of $\theta$ for
which $\sin \theta = \frac{E}{2}$.
\end{cl}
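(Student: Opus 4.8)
The plan is to evaluate $\Re g$ directly along the contour $\Gamma_2$ and to observe that it becomes a quadratic in a single real variable. First I would substitute $z = e^{i\theta}$ from (\ref{eq:gamma2}): since $|z| = 1$, the term $\Re \ln z = \ln|z|$ vanishes — so the choice of branch of the logarithm plays no role and $\Re g$ is genuinely single-valued on $\Gamma_2$ — while $\Re(iE e^{i\theta}) = -E\sin\theta$ and $\Re(-\tfrac12 e^{2i\theta}) = -\tfrac12 + \sin^2\theta$. Hence $\Re g(e^{i\theta}) = \sin^2\theta - E\sin\theta - \tfrac12$.

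Second, I would set $u = \sin\theta \in [-1,1]$ and minimize the quadratic $u \mapsto u^2 - Eu - \tfrac12$. It opens upward, so its unconstrained minimizer is $u = E/2$; since $|E| < 2$ (indeed $|E| < 2-\delta$ in the theorems) this lies in the open interval $(-1,1)$, so the minimum of $\Re g$ on $\Gamma_2$ is attained precisely at the $\theta$ with $\sin\theta = E/2$, which is two values in $[0,2\pi)$.

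Third, I would identify these points with the saddles $z_\pm$ of (\ref{eq:zpm}). From $z_\pm = \tfrac{iE}{2} \pm \sqrt{1 - E^2/4}$ one computes $|z_\pm|^2 = E^2/4 + (1 - E^2/4) = 1$, so $z_\pm \in \Gamma_2$; and $\Im z_\pm = E/2$, i.e.\ $\sin\theta = E/2$ at $z = z_\pm$. Combining the three steps gives the claim. (This same computation $|z_\pm| = 1$ also confirms that $\Gamma_2$ passes through both saddles, as asserted when the contour was deformed.)

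I do not anticipate a genuine obstacle here: the argument is essentially a one-line reduction to a quadratic. The only points to be careful about are that $\Re g$ is well-defined on the circle in spite of the multivaluedness of $\ln$, that the quadratic in $\sin\theta$ has positive leading coefficient (so the critical point is a minimum, not a maximum), and that $|E| < 2$ keeps the minimizer $E/2$ inside $[-1,1]$ — at the edge $|E| = 2$ the two saddles $z_\pm$ merge, which is exactly why the theorems are stated for $|E| < 2 - \delta$.
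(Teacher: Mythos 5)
Your proof is correct and essentially the same as the paper's: both start by writing $\Re g(e^{i\theta}) = \sin^2\theta - E\sin\theta - \tfrac12$ (the paper writes it as $-E\sin\theta - \tfrac12(\cos^2\theta - \sin^2\theta)$, which is the same thing). The paper then differentiates in $\theta$, getting $-\cos\theta(E - 2\sin\theta)$, and leaves the conclusion as ``the claim easily follows''; your substitution $u = \sin\theta$ reducing it to an upward-opening quadratic with interior minimizer $u = E/2$ is a slightly more transparent way of finishing that last step, and your verification that $|z_\pm| = 1$ with $\Im z_\pm = E/2$ is a useful sanity check the paper leaves implicit.
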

\begin{proof}[Proof of Claim~\ref{cl-g}]
We have:
\[ \Re f(e^{i\theta}) = - E \sin \theta - \frac{1}{2} (\cos^2 \theta - \sin^2 \theta)~,\]
hence
\[ \frac{d}{d\theta} \Re f(e^{i\theta}) = - \cos \theta (E - 2 \sin \theta)~. \]
The claim easily follows.
\end{proof}
According to Claims~\ref{cl-f} and \ref{cl-g}, the saddle-point approximation is justified, i.e.\ the
asympototics of the integral
\[ I = \frac{(-1)^{N-1}N}{2\pi} \int_{\Gamma_1} ds \oint_{\Gamma_2} dz (iE - z + s)
\exp \left\{ - N(f(s) + g(z)) \right\}\]
is given (to arbitrary order in $1/N$) by the contribution of the saddle points $(s= s_+, z=z_+)$ and
$(s=s_+, z = z_-)$. This follows from the general results on saddle-point approximation, see e.g.\
Fedoryuk \cite[\S 4.1, Theorem~1.2]{F}.

Let us compute the contribution of the saddle points (up to order $1/N$).

Denote $C_N = \frac{(-1)^{N-1}N}{2\pi}$. Then the second part of Lemma~\ref{l:GUE} yields:
\[ i = \langle i \rangle = C_N
\int_{\Gamma_1} ds \oint_{\Gamma_2} dz \frac{iE - z + s}{s}
\exp \left\{ - N(f(s) + g(z)) \right\}~,\]
therefore
\[ I = i s_+ + \frac{(-1)^{N-1}N}{2\pi} \int_{\Gamma_1} ds \oint_{\Gamma_2} dz \frac{iE - z + s}{s} (s-s_+)
\exp \left\{ - N(f(s) + g(z)) \right\}~. \]
We have:
\[ f''(s) = 1 + s^{-2}~, \quad f'''(s) = - 2 s^{-3}~,\]
hence
\[ f(s_+) = iE s_+ + \frac{1}{2} s_+^2 - \ln s_+~, \quad
f''(s_+) = \frac{s_+^2+1}{s_+^2}~, \quad f'''(s_+) = - \frac{2}{s_+^3}~.
\]
Also, for
\[ \phi(s) = \frac{iE - z + s}{s} (s-s_+)~, \]
we have:
\[ \phi'(s) = 1 + (iE - z) \frac{s_+}{s^2}~, \quad \phi''(s) = - (iE-z) \frac{2s_+}{s^3}~, \]
hence
\[ \phi(s_+) = 0~, \quad \phi'(s_+) = \frac{iE + s_+ - z}{s_+}~, \quad \phi_2(s_+) = - 2 \frac{i E - z}{s_+}~. \]
Therefore
\[\begin{split}
&\sqrt{\frac{2\pi}{N}} \frac{e^{-Nf(s_+)}}{N (f''(s_+))^{3/2}} \left[ \frac{\phi''(s_+)}{2} - \frac{f'''(s_+)\phi'(s_+)}{2f''(s_+)}\right]\\
&\quad= \sqrt{\frac{2\pi}{N}} \frac{1}{N} e^{-N(iEs_+ + \frac{1}{2}s_+^2 - \ln s_+)} \frac{s_+}{(s_+^2+1)^{3/2}} \left[ z - iE +
  \frac{iE + s_+ - z}{s_+^2-1}\right]~.
  \end{split}\]
This expression gives the contribution of $s_+$ to the $s$-integral, up to corrections of order $N^{-3/2}$. The contribution
of $z_\pm$ to the $z$-integral is given by
\[ \sqrt{\frac{2\pi}{N}} \left[ \frac{e^{-Ng(z_+)}}{\sqrt{g''(z_+)}} \psi(z_+)+ \frac{e^{-Ng(z_-)}}{\sqrt{g''(z_-)}} \psi(z_-)\right]~,\]
where $\psi(z)$ is the prefactor in the $z$-integral. This expression is equal to
\[ e^{-N(iE z_+ - \frac{1}{2}z_+^2 + \ln z_+)} \frac{iz_+s_+}{(z_+^2 + 1)^{1/2}}~.\]
Note that the second term vanishes since $\psi(z_-) = 0$.
Combining all the expressions, we obtain:
\[\begin{split}
I &= \frac{E}{2} + i \sqrt{1-\frac{E^2}{4}} \\
&\quad- \frac{(-1)^{N}i}{4N(1-E^2/4)} e^{-N \left[iE \sqrt{1-E^2/4} + 2 \ln (iE/2 + \sqrt{1-E^2/4})\right]}
+ O(N^{-3/2})~.
\end{split}\]
We conclude the proof by taking the imaginary part and using (\ref{eq:rhoG}).
\end{proof}

\begin{proof}[Proof of Theorem~\ref{thm:goe}]
As in the proof of Theorem~\ref{thm:goe}, we take the limit $\epsilon \to +0$ in the formula from
Lemma~\ref{l:GOE}, and use saddle-point analysis to compute the asymptotics of the resulting integral
\[
I = \frac{(-1)^{N}2^NN^2}{8\pi^2} \int_0^\infty ds \int_0^\infty dt \int_{-1}^1 d\alpha \oint dz
\exp\left\{-N F(s,t,z,\alpha)\right\} \Phi(s,t,z,\alpha)~,
\]
where
\begin{multline*}F(s,t,z,\alpha) = iE(s+t) + s^2 + t^2 - \frac{1}{2}\ln s - \frac{1}{2} \ln t \\
+ iEz - \frac{1}{2} z^2 + \ln z
+2st\alpha^2 - \frac{1}{2}\ln(1-\alpha^2)
\end{multline*}
and
\begin{multline*}
\Phi(s,t,z,\alpha) =  z \frac{s+t}{st}(1-\alpha^2)^{-3/2} \\
\times \left[ \frac{1}{N} + (z-iE_\epsilon)^2 - 2 (z-iE_\epsilon)(s+t)
    + 4st(1-\alpha^2)\right]~.
\end{multline*}
The relevant saddles of $F$ are given by $\alpha = 0$, $s = t = \frac{s_+}{2}$, $z = z_\pm$, where $s_+$ is the
same as in (\ref{eq:spm}) and $z_\pm$ is as in (\ref{eq:zpm}). We shall deform the contours so that they will pass
through these saddles and the minimum of $\Re F$ will be achieved only at these two points. We do it as follows: the $\alpha$-contour will
remain the interval $[-1, 1]$. In the $s$ and $t$-variables, we integrate along the contour $\Gamma_1$ from (\ref{eq:gamma1}),
with the modification
\[ A = \begin{cases}
2~, &|E| \leq \sqrt{\frac52}\\
\frac{1}{\sqrt{E^2/2-1}}~, &\sqrt{\frac52} < |E| < 2\\
       \end{cases}~,\]
while in the $z$-variable, we integrate along $\Gamma_2$ from (\ref{eq:gamma2}).
\begin{cl}\label{cl-al}
For every $s,t \in \Gamma_1$ and $z \in \Gamma_2$, the minimum of $\Re F(s,t,z,\alpha)$ on $[-1, 1]$ is achieved at
the point $\alpha = 0$.
\end{cl}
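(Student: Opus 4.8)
The plan is to isolate the $\alpha$-dependence of $F$. Inspecting $F(s,t,z,\alpha)$, the only term containing $\alpha$ is $2st\alpha^2-\tfrac12\ln(1-\alpha^2)$; the terms $iE(s+t)$, $s^2+t^2$, $-\tfrac12\ln s-\tfrac12\ln t$, and the entire $z$-block do not involve $\alpha$. Since $\alpha$ runs over the \emph{real} segment $[-1,1]$, both $\alpha^2$ and $1-\alpha^2>0$ are real, so for fixed $s,t\in\Gamma_1$ and $z\in\Gamma_2$,
\[
 \Re F(s,t,z,\alpha)-\Re F(s,t,z,0)\;=\;2\,\Re(st)\,\alpha^2\;-\;\tfrac12\ln(1-\alpha^2)\;=:\;h(\alpha)\,.
\]
Thus Claim~\ref{cl-al} is equivalent to the assertion that $h$ attains its minimum over $[-1,1]$ at $\alpha=0$, and the whole statement now hinges on the single real number $c:=\Re(st)$.

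Next I would study $h$ by elementary calculus: $h'(\alpha)=4c\,\alpha+\dfrac{\alpha}{1-\alpha^2}=\alpha\Bigl(4c+\dfrac{1}{1-\alpha^2}\Bigr)$. Since $\dfrac{1}{1-\alpha^2}\ge1$ on $(-1,1)$, as soon as $c\ge-\tfrac14$ the bracket is $\ge0$, so $h'$ has the sign of $\alpha$; hence $h$ is decreasing on $[-1,0]$, increasing on $[0,1]$, and its minimum on $[-1,1]$ is attained (only) at $\alpha=0$. (Equivalently, $-\tfrac12\ln(1-\alpha^2)\ge\tfrac12\alpha^2$ from the power series, so $h(\alpha)\ge(2c+\tfrac12)\alpha^2\ge0$.) Conversely, if $c<-\tfrac14$ then $h(\alpha)<0$ for small $\alpha\neq0$, so the bound $c\ge-\tfrac14$ is exactly what is required, and the whole claim reduces to
\[
 \Re(st)\;\ge\;-\tfrac14\qquad\text{for all }s,t\in\Gamma_1\,.
\]

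Proving this lower bound is the heart of the argument, and it is where the shape of $\Gamma_1$ and the truncation parameter $A$ are used. On the initial segment of $\Gamma_1$ one has $s=\sigma\tilde s$ with $\sigma$ the relevant saddle (here $\sigma=s_+/2$, with $s_+$ as in (\ref{eq:spm})) and $\tilde s\in[0,A]$, while on the remaining horizontal ray $\Re s$ only grows and $\Im s$ stays fixed (and likewise for $t$). Hence: (i) when both $s,t$ lie on the initial segment, $\Re(st)=\Re(\sigma^2)\,\tilde s\tilde t$, with $\Re(\sigma^2)=\tfrac14\bigl(1-\tfrac{E^2}{2}\bigr)$; this is $\ge0$ for $|E|\le\sqrt2$, and for $|E|>\sqrt2$ it is smallest at the corner $\tilde s=\tilde t=A$, with value $\tfrac14\bigl(1-\tfrac{E^2}{2}\bigr)A^2$; (ii) moving either variable out along the ray can only increase $\Re(st)$ (it increases $\Re s$ without changing $\Im s$), so the infimum over $\Gamma_1\times\Gamma_1$ is still attained at that corner. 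Therefore $\Re(st)\ge\min\{0,\ \tfrac14(1-\tfrac{E^2}{2})A^2\}$, and $A$ is chosen precisely so that this is $\ge-\tfrac14$: for $|E|\le\sqrt{5/2}$ the value $A=2$ already gives $1-\tfrac{E^2}{2}\ge-\tfrac14$, and for $\sqrt{5/2}<|E|<2$ the prescription $A=(E^2/2-1)^{-1/2}$ makes $\tfrac14(\tfrac{E^2}{2}-1)A^2=\tfrac14$ exactly. (That this $A$ stays bounded as $|E|\uparrow2$ is also what keeps the contour clear of the spectral edge and lets Cauchy's theorem justify the deformation from the original ray $[0,\infty)$; the tails of $\Gamma_1$ contribute nothing since $\Re F\to+\infty$ there.) Plugging $\Re(st)\ge-\tfrac14$ back into the monotonicity of $h$ finishes the proof.

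The step I expect to be the genuine obstacle is exactly this last inequality for $\sqrt2<|E|<2$: the $\alpha$-minimization is a one-line computation, but the bound $\Re(st)\ge-\tfrac14$ is tight and forces the case-dependent tuning of $A$ — the contour must be truncated before it enters the region where $\Re(st)$ falls below $-\tfrac14$, yet it must still pass through the saddle $s_+/2$ and remain deformable to infinity. Everything else — the reduction to the single $\alpha$-term and the sign analysis of $h'$ — is routine.
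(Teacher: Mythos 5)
Your proposal is correct and follows the same basic strategy as the paper: compute $\partial_\alpha \Re F$, factor out $\alpha$, and show the other critical points $\alpha_\pm$ (with $\alpha_\pm^2 = 1 + \frac{1}{4\Re(st)}$) lie outside $[-1,1]$. What differs — and where you are actually \emph{more} careful than the paper — is the lower bound on $\Re(st)$. The paper asserts $\Re(st) \geq A^2\,\Re s_+^2 \geq 0$, but this is not correct as written: with the contour through the GOE saddle $s_+/2$, one has, on the linear piece, $\Re(st)=\tfrac14\Re(s_+^2)\,\tilde s\tilde t = \tfrac14(1-E^2/2)\,\tilde s\tilde t$, which is \emph{negative} for $|E|>\sqrt2$, and at the corner $\tilde s=\tilde t=A$ it equals exactly $-\tfrac14$ for $|E|\ge\sqrt{5/2}$ (and this is also why $\Re(s_+^2)$, rather than $(\Re s_+)^2$, must be meant). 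You correctly identify that the sharp requirement is $\Re(st)\ge -\tfrac14$ — not $\ge 0$ — since $\alpha_\pm^2\le 0$ whenever $-\tfrac14\le\Re(st)<0$, and you verify that this is precisely what the paper's case-dependent choice of $A$ guarantees, with equality in the critical regime. Your observation that moving along the horizontal ray can only increase $\Re(st)$ (since $\Re t\ge 0$ everywhere on $\Gamma_1$) properly closes the argument over the full contour. In short: same mechanism, but your version of the lower-bound step is the correct one and explains the otherwise mysterious tuning of $A$.
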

\begin{proof}[Proof of Claim~\ref{cl-al}]
We have:
\[ \frac{\partial \Re F}{\partial \alpha} = 4 \Re st \alpha + \frac{\alpha}{1-\alpha^2} =
\frac{\alpha}{1-\alpha^2} ((4 \Re st + 1) - 4 \Re st \alpha^2)~. \]
The derivative vanishes at $\alpha = 0$, and at the two points $\alpha_\pm$ given by
\[ \alpha^2 = 1 + \frac{1}{4\Re st}~.\]
For $s,t$ on the contour $\Gamma_1$,
\[ \Re st \geq A^2 \Re s_+^2 \geq 0~,\]
hence the two points are not in the domain $[-1, 1]$.
Since $\Re f$ tends to $+\infty$ as $\alpha \to \pm 1$, the minimum is
indeed at $\alpha = 0$.
\end{proof}
According to Claim~\ref{cl-al} and Claims~\ref{cl-f} (modified for the new choice of $A$) and \ref{cl-g}, the
minimum of $F$ is indeed achieved at the two saddles. Thus the asympototic contribution to $I$ comes, to any order
in $N^{-1}$, ony from the neighborhoods of these saddles.

Since
\[  \frac{(-1)^{N}2^NN^2}{8\pi^2} \int_0^\infty ds \int_0^\infty dt \int_{-1}^1 d\alpha \oint dz
\exp\left\{-N F(s,t,z,\alpha)\right\} \frac{\Phi(s,t,z,\alpha)}{s+t} = i\]\
according to the second half of Lemma~\ref{l:GOE}, we rewrite $I$ as
\[ I = i s_+ + C_N \int_0^\infty ds \int_0^\infty dt \int_{-1}^1 d\alpha \oint dz
e^{ -N(f(s)+f(t)+g(s,t,\alpha)+h(z))} \Phi_1(s,t,z,\alpha)~,
\]
where
\[\begin{split}
C_N &= \frac{(-1)^{N}2^NN^2}{8\pi^2}~, \\
f(s) &=  iEs + s^2 -  \frac{1}{2}\ln s~, \\
g(s,t,\alpha) &= 2st\alpha^2 - \frac{1}{2}\ln(1-\alpha^2)~,\\
h(z) &= iEz - \frac{1}{2} z^2 + \ln z~, \\
\Phi_1(s,t,z,\alpha) &= z \frac{s- \frac{s_+}{2}+t - \frac{s_+}{2}}{st}(1-\alpha^2)^{-3/2} \\
&\quad \times \left[ \frac{1}{N} + (z-iE_\epsilon)^2 - 2 (z-iE_\epsilon)(s+t)
    + 4st(1-\alpha^2)\right]~.
  \end{split}\]
The addend $1/N$ does not contribute to the asymptotics up to order $1/N$. Therefore we  compute the leading term of
the contribution of $\alpha = 0$ to the integral
\begin{equation}
\begin{split}
&\int e^{-N(2st\alpha^2 - \frac{1}{2} \ln(1-\alpha^2))} (1-\alpha^2)^{-3/2} \\
&\qquad\left[(z - iE)^2 - 2 (z-iE)(s+t) + 4st(1-\alpha^2)\right] d\alpha.
\end{split}
\end{equation}
It is equal to
\[
\sqrt{\frac{2\pi}{N}} \frac{1}{\sqrt{4st+1}} \left[(z - iE)^2 - 2 (z-iE)(s+t) + 4st\right]~, \]
up to terms of higher order. Now we integrate over $s$ and $t$, keeping the terms up to order $1/N$.
The integral is given by
\begin{equation}
I_2(z) = \int_0^\infty ds \int_0^\infty dt \, e^{-N  (f(s)+f(t))} \Phi_2(s,t,z)~,
\end{equation}
where
\[
\Phi_2(s, t, z) = \frac{s- \frac{s_+}{2}+t - \frac{s_+}{2}}{st \sqrt{4st+1}}
\left[(z-iE_\epsilon)^2 - 2 (z-iE_\epsilon)(s+t)
    + 4st\right]~;
\]
we compute the contribution of $s  = t = s_+$ to order $1/N$, which is equal to
\[ \frac{2\pi}{N^2} e^{-2Nf(s_+/2)} \frac{1}{f''(s_+/2)^2} \left[ \Phi_{2\,ss}(s_+/2,s_+/2,z) - \frac{f'''(s_+/2)}{f''(s_+/2)} \Phi_{2\, s}(s_+/2,s_+/2,z) \right] \]
since $\Phi_2(s_+/2,s_+/2,z) = 0$. This expression is equal to
\begin{multline*} \frac{2\pi}{N^2} e^{-N(iEs_+ + s_+^2/2 - \ln \frac{s_+}{2})} \frac{2s_+^2}{(s_+^2+1)^{5/2}}\\
 \times \left[ 2s_+ - 2(z-iE) - \frac{3s_+}{s_+^2+1} \left( (z-iE)^2 - 2 s_+ (z-iE) + s_+^2 \right)\right]~.
\end{multline*}
Then we compute the contribution of $z = z_\pm$ to the integral over $z$,
\[ \oint dz e^{-N h(z)} \Phi_3(z)~,\]
where
\[ \Phi_3(z) = z  \left[ 2s_+ - 2(z-iE) - \frac{3s_+}{s_+^2+1} \left( (z-iE)^2 - 2 s_+ (z-iE) + s_+^2 \right)\right]~. \]
The contribution comes from $z_-$, since $\Phi_3(z_+)=0$, and we only need the leading term,
\[ \sqrt{\frac{2\pi}{N}} \frac{1}{\sqrt{h''(z_-)}} e^{-Nh(z_-)} \Phi_3(z_-)~. \]
This final computation yields the answer:
\[ I = i s_+ + i \frac{\frac{iE}{2} - \sqrt{1-\frac{E^2}{4}}}{4N(1-\frac{E^2}{4})} + O(N^{-3/2})~. \]
In particular,
\[ \rho(E) = \Im \pi^{-1} I = \wig(E) - \frac{1}{4\pi^2 N \wig(E)} + O(N^{-3/2})~.\]
\end{proof}

\end{document}